\newcommand{\lra}{\overset{\ast}{\longleftrightarrow}}
\newcommand{\leftrightover}[1]{{^{\underleftrightarrow{#1}}}}
\newcommand{\leftover}[1]{{^{\underleftarrow{#1}}}}
\newcommand{\rightover}[1]{{^{\underrightarrow{#1}}}}
\newcommand{\rwtermss}{\mathcal{T}(\mathcal{F},X)}
\newcommand{\constants}{\mathcal{F}_0}
\newcommand{\NF}{NF_{R}}
\newcommand{\UNeq}{$\operatorname{UN}^=$}
\begin{document}

% Page heads
\markboth{Nicholas Radcliffe et al.}{Uniqueness of Normal Forms for
  Shallow Term Rewrite Systems}

% Title portion
\title{Uniqueness of Normal Forms for Shallow Term Rewrite Systems}
\author{NICHOLAS R. RADCLIFFE, LUIS F. T. MORAES, AND RAKESH M. VERMA \affil{University of Houston}}

% NOTE! Affiliations placed here should be for the institution where the
%       BULK of the research was done. If the author has gone to a new
%       institution, before publication, the (above) affiliation should NOT be changed.
%       The authors 'current' address may be given in the "Author's addresses:" block (below).
%       So for example, Mr. Abdelzaher, the bulk of the research was done at UIUC, and he is
%       currently affiliated with NASA.

\begin{abstract}
\noindent Uniqueness of normal forms (\UNeq) is an important property of term rewrite
systems. \UNeq{} is decidable for ground (i.e., variable-free) systems and undecidable in general. Recently
it was shown to be decidable for linear, shallow systems. We generalize this previous
result and show that this property is decidable for shallow rewrite
systems, in contrast to confluence, reachability and other properties, which are
all undecidable for flat systems. Our result is also optimal in some
sense, since we prove that the \UNeq{} property is undecidable for two
classes of linear rewrite systems: left-flat systems in which right-hand
sides are of depth at most two and right-flat systems in which
left-hand sides are of depth at most two.  
\end{abstract}

%
% The code below should be generated by the tool at
% http://dl.acm.org/ccs.cfm
% Please copy and paste the code instead of the example below. 
% Done 2/22/16
\begin{CCSXML}
<ccs2012>
<concept>
<concept_id>10003752.10003790.10003798</concept_id>
<concept_desc>Theory of computation~Equational logic and rewriting</concept_desc>
<concept_significance>500</concept_significance>
</concept>
<concept>
<concept_id>10003752.10003753.10003754</concept_id>
<concept_desc>Theory of computation~Computability</concept_desc>
<concept_significance>300</concept_significance>
</concept>
</ccs2012>
\end{CCSXML}

\ccsdesc[500]{Theory of computation~Equational logic and rewriting}
\ccsdesc[300]{Theory of computation~Computability}

%
% End generated code
%

% We no longer use \terms command
%\terms{Design, Algorithms, Performance}

\keywords{term rewrite systems, uniqueness of normal forms,
  decidability/undecidability, shallow rewrite systems, flat rewrite
  systems}

\acmformat{Nicholas Radcliffe, Luis F. T. Moraes, and Rakesh Verma,
  20xx. Decidability of Unicity for Shallow Term Rewrite Systems.}
% At a minimum you need to supply the author names, year and a title.
% IMPORTANT:
% Full first names whenever they are known, surname last, followed by a period.
% In the case of two authors, 'and' is placed between them.
% In the case of three or more authors, the serial comma is used, that is, all author names
% except the last one but including the penultimate author's name are followed by a comma,
% and then 'and' is placed before the final author's name.
% If only first and middle initials are known, then each initial
% is followed by a period and they are separated by a space.
% The remaining information (journal title, volume, article number, date, etc.) is 'auto-generated'.

\begin{bottomstuff}
A preliminary version of this paper appeared in IARCS Annual
Conference on Foundations of Software Technology and Theoretical
Computer Science (FSTTCS 2010).
%This work is supported by the National Science Foundation, under
%grant CNS-0435060, grant CCR-0325197 and grant EN-CS-0329609.

Author's addresses: Luis F. T. Moraes {and} Rakesh Verma, Computer Science Department,
University of Houston.
% Nicholas Radcliffe, (Current address) Unknown.
\end{bottomstuff}

\maketitle

\section{Introduction}\label{sec-intro}

Term rewrite systems (TRSs), finite sets of rules, are useful in many
computer science fields including theorem proving, rule-based
programming, and symbolic computation. An important property of TRSs
is confluence (also known as the Church-Rosser property), which
implies unicity or uniqueness of normal forms (\UNeq{}). Normal forms are
% NR begin
expressions to which no rule is applicable. A TRS has the \UNeq{}
property if  
there are \emph{not} distinct normal forms $n$, $m$ such that $n
% NR end
~\leftrightover{*}_R~ m$, where $\leftrightover{*}_R$ is the symmetric
closure of the rewrite relation induced by the TRS $R$. There is a
related property called $UN^\rightarrow$, which is defined as: no term should
have more than one normal form, i.e., if $m$ and $n$ are two normal
forms reachable from the same term ($\leftover{*}\circ \rightover{*}$), then $R$ does not have the
$UN^\rightarrow$ property.  This property is known to be undecidable for flat
systems and also flat and right-linear systems \cite{godoyJ09}.     

Uniqueness of
normal forms is an interesting property in itself and
well-studied \cite{Terese}.  Confluence can be too strong a
% NR begin 
requirement for some applications such as lazy programming.
% NR end
Additionally, in the proof-by-consistency approach for inductive
theorem proving, consistency is often ensured by requiring the \UNeq{}
property.

We study the decidability of uniqueness of normal
forms.  Uniqueness of normal forms is decidable for ground systems
\cite{vrl}, but is undecidable in general \cite{vrl}.  Since
the property is undecidable in general, we would like to know for
which classes of rewrite systems, beyond ground systems, we can decide \UNeq{}. In 
\cite{zv06,julian} a polynomial time algorithm for this property was given
for linear, shallow rewrite systems. A rewrite system is {\it linear} if
variables occur at most once in each side of any rule.  It 
is {\it shallow} if variables occur only at depth zero or depth one
in each side of any rule.  It is {\it flat} if both the left- and
right-hand sides of all the rules have  
% NR begin
height zero or one. An example of a linear flat (in fact, ground) system that has \UNeq{} but not confluence
% NR end
is $\{f(c) \to 1, \, c \to g(c)\}$.  More sophisticated examples can
be constructed using a sequential `or' function in which the second
argument gives rise to a nonterminating computation.

In this
paper, we consider the class of shallow systems, i.e., we drop the
linearity restriction of \cite{zv06}, and a subset
of this class, the flat systems. For flat systems many properties
are known to be undecidable including confluence, reachability, joinability, and
existence of normal forms \cite{moj06,rakrule08,gh09}. On the other hand, the word problem is known to be
decidable for shallow systems \cite{CHJ}.  This paper shows that
the uniqueness of normal forms problem is
decidable for the class of shallow term rewrite systems, which is
a significant generalization of \cite{zv06} and also somewhat
surprising since so many properties are undecidable for this class of
systems. We also prove the undecidability of \UNeq{} for two
subclasses of linear
systems: left-hand sides are flat and right-hand sides are
of depth at most two and conversely right-flat and depth two left-hand
sides, which improves the undecidability  result of
\cite{rakrule08} for the linear, depth-two subclass and shows that
our result is optimal as far as linearity and depth restrictions are
involved. 

We would like to clarify the relationship between $UN^\rightarrow$
(see \cite{Terese} for a definition) and
$UN^=$. It is well known in rewriting that $UN^=$ implies
$UN^\rightarrow$ but not the other way around. For a simple example,
well-known since \cite{klop,klop92}, consider $a \rightarrow b$, $a \rightarrow
c$, $c \rightarrow c$, $d \rightarrow c$, and $d \rightarrow e$. This
example has $UN^\rightarrow$ since $c$ is not a normal form but does
not have $UN^=$ since normal forms $b$ and $e$ satisfy $b =_R e$, so
$UN^{\rightarrow}$ does not imply $UN^=$. However, just because
property A implies property B it does not automatically follow that if
A is decidable for a class of inputs, then B is also decidable for the
same class of inputs. For this we need the concept of a reduction and
in fact the second author has shown  ~\cite{verma09fi} that for
variable-preserving rewrite systems $UN^=$ reduces to
$UN^\rightarrow$.

{\em Comparison with related work.} Viewed at a very high level, the
proof of decidability shows some flavor in common with that of some
other decidability proofs of properties of rewrite systems such as 
~\cite{gtv03}. The basic insight is that,  just as in algebra the
terms that reduce to 0 are crucial in a sense, so in rewriting are the
terms that reduce to (or are equivalent to)  constants. We see a
parallel between constants, which are height 0 terms in rewriting with
the expression 0 in algebra. Of course, this observation is about as
helpful in proofs of decidability as a compass is to someone lost in a
maze. The details in both scenarios are vital and there are many
twists and turns. The proof of undecidability shows some similarity
with proofs in ~\cite{vrl,gt07}.  

The structure of our decidability proof is as follows: in \cite{zv06,julian}
it was shown that \UNeq{} for
shallow systems can be reduced to \UNeq{} for flat systems, (ii)
checking \UNeq{} for flat systems can be reduced to searching for
equational proofs between terms drawn from a finite set of terms, and (iii)
existence of equational proofs between  terms in 
part (ii) is done thanks to the decidability of the word problem~
by Comon et al. \cite{CHJ}. 

Our strategy for part (ii) above, assuming a flat TRS, $R$, is to show that a
sufficiently small witness to non-$UN^=$ for $R$ exists if, and only
if, any witness at all exists. To see this, say
$\langle M,N \rangle$ is a minimal witness to non-$UN^=$ (in that the
sum of the sizes of $M$ and $N$ is minimal). We show that we can
replace certain subterms of $M$ and $N$ that are not equivalent to
constants with variables, obtaining a witness $\langle M',N'
\rangle$. If the heights of $M'$ and $N'$ are both strictly less than
$\max(1,C)$, where $C$ is the number of constants in our
rewrite system, then $\langle M',N' \rangle$ is sufficiently
small. Otherwise, $M'$ or $N'$ must have a big subterm (i.e. a subterm
whose height is greater than, or equal to, the number of constants),
and this subterm is equivalent to a constant. However, in this case
(when there is a constant that is equivalent to a big subterm of a
component of a minimal witness), we can show that there is a small
witness to non-$UN^=$. So, in all cases, we end up with a small
witness. 

This paper improves our previous work in \cite{nv10} by
strengthening the undecidability proof. In particular, the previous
proofs work only for either left-nonlinear or right-nonlinear systems, whereas here the reductions
give {\em linear} systems of the appropriate type. 
\subsection{Definitions}
\noindent {\bf Terms.} A \emph{signature} is a set $\mathcal{F}$ along
with a function \emph{arity}$: \mathcal{F} \rightarrow
\mathbb{N}$. Members of $\mathcal{F}$ are called \emph{function
  symbols}, and $arity(f)$ is called the $arity$ of the function
symbol $f$. Function symbols of arity zero are called $constants$. Let
$X$ be a countable set disjoint from $\mathcal{F}$ that we shall call
the set of $variables$. The set $\mathcal{T}(\mathcal{F},X)$ of
\emph{$\mathcal{F}$-terms over $X$} is defined to be the smallest set
that contains $X$ and has the property that $f(t_1,\ldots,t_n) \in
\mathcal{T}(\mathcal{F},X)$ whenever $f \in \mathcal{F}$, $n =
arity(f)$, and $t_1,\ldots,t_n \in \mathcal{T}(\mathcal{F},X)$. The
set of function symbols with arity $n$ is denoted by $\mathcal{F}_n$;
in particular, the set of constants is denoted
by $\constants$. We use $root(t)$ to refer to the outermost function
symbol of $t$.  

The $size$, $|t|$, of a term $t$ is the number of occurrences of
constants, variables and function symbols in $t$. So, $|t| = 1$ if $t$
is a constant or a variable, and $|t| = 1 + \Sigma_{i=1}^n |t_i|$ if
% NR begin
$t = f(t_1,\ldots,t_n)$ for $n > 0$. The $height$ of a term $t$ is $0$ if $t$ is a
% NR end
constant or a variable, and $1 + max
\{height(t_1),\ldots,height(t_n)\}$ if $t = f(t_1,\ldots,t_n)$. If a
term $t$ has height zero or one, then it is called $flat$. A
$position$ of a term $t$ is a sequence of natural numbers that is used
to identify the locations of subterms of $t$. The subterm of $t =
f(t_0,\ldots,t_{n-1})$ at position $p$, denoted $t|_p$, is defined
recursively: $t|_\lambda = t$, $t|_k = t_k$, for $0 \leq k \leq n-1$,
and $t|_{k.p} = (t|_k)|_p$. If $t = f(t_0,\ldots,t_{n-1})$, then we
call $t_0,\ldots,t_{n-1}$ the \emph{depth-$1$} subterms of
$t$. If all variables appearing in $t$ are either $t$ itself or
depth-$1$ subterms of $t$, then we  say that $t$ is
\emph{shallow}. The notation $g[a]$ focuses on (any) one occurrence of subterm $a$ of term
$g$, and $s\{u \mapsto v\}$  denotes the term obtained from term $s$
by replacing all occurrences of the subterm $u$ in $s$ by term $v$.

A $substitution$ is a mapping $\sigma: X \rightarrow
\mathcal{T}(\mathcal{F},X)$ that is the identity on all but finitely
many elements of $X$. Substitutions are generally extended to a
homomorphism on $\mathcal{T}(\mathcal{F},X)$ in the following way: if
$t = f(t_1,\ldots,t_k)$, then (abusing notation) $\sigma(t) =
f(\sigma(t_1),\ldots,\sigma(t_k))$. Oftentimes, the application of a
substitution to a term is written in postfix notation. A $unifier$ of
two terms $s$ and $t$ is a substitution $\sigma$ (if it exists) such
that $s\sigma= t\sigma$. We assume familiarity with
the concept of {\em most general unifier} \cite{Terese}, which is unique up to variable
renaming and denoted by $mgu$. 

\noindent {\bf Term Rewrite Systems.} A \emph{rewrite rule} is a pair
of terms, $(l,r)$, usually written $l \rightarrow r$. For the rule $l
\rightarrow r$, the \emph{left-hand side} is $l \notin X$, and the
\emph{right-hand side} is $r$. Notice that $l$ cannot be a variable. A
rule, $l \rightarrow r$, can be applied to a term, $t$, if there
exists a substitution, $\sigma$, such that $l\sigma = t'$, where $t'$
is a subterm of $t$; in this case, $t$ is rewritten by replacing the
subterm $t' = l\sigma$ with $r\sigma$. The process of replacing the
subterm $l\sigma$ with $r\sigma$ is called a \emph{rewrite}. A
\emph{root rewrite} is a rewrite where $t' = t$. A rule $l \rightarrow
r$ is \emph{flat} (resp. shallow) if both $l$ and $r$ are flat
(resp. shallow). The rule $l \rightarrow r$ is \emph{collapsing} if
$r$ is a variable. A \emph{term rewrite system} (or \emph{TRS}) is a pair,
$(\mathcal{T},R)$, where $R$ is a finite set of rules and
$\mathcal{T}$ is the set of terms over some signature. A TRS, $R$, is
\emph{flat} (resp. shallow) if all of the rules in $R$ are flat
(resp. shallow). If we think of $\rightarrow$ as a relation, then
$\rightover{+}$ and $\rightover{*}$ denote its transitive
closure, and reflexive and transitive closure, respectively. Also,
$\leftrightarrow$, $\leftrightover{+}$, and $\leftrightover{*}$ 
denote the symmetric closure, symmetric and transitive closure, and
symmetric, transitive, and reflexive closure, respectively. We 
put an `r' over arrows to denote a root rewrite, i.e.,
$\leftrightover{r}$. 

A \emph{derivation} is a sequence of terms, $t_1,\ldots,t_n$, such
that $t_i \rightarrow t_{i+1}$ for $i = 1,\ldots,n-1$; this sequence
is often denoted by $t_1 \rightarrow t_2 \rightarrow \ldots
\rightarrow t_n$. A \emph{proof} is a sequence, $t_1,\ldots,t_n$, such
that $t_i \leftrightarrow t_{i+1}$ for $i = 1,\ldots,n-1$; this
sequence is generally denoted by $t_1 \leftrightarrow t_2
\leftrightarrow \ldots \leftrightarrow t_n$. If $R$ is a rewrite
system, then a proof is \emph{over} $R$ if it can be
constructed using rules in $R$. If $\pi$ is a proof, we say that $\pi
\in s \leftrightover{*} t$ if $\pi$ is of the form $s \leftrightarrow
% NR begin
\ldots \leftrightarrow t$ (it is possible for the proof sequence to consist 
of a single term, in which case $s = t$),
% NR end
$s$). We say that $\pi \in s \leftrightover{+} t$ if $\pi \in s
% NR begin
\leftrightover{*} t$ and the proof sequence contains at least one step. We write $s
% NR end
\leftrightover{*} t$ (resp. $s \leftrightover{+} t$) to denote that
there is a proof, $\pi$, with $\pi \in s \leftrightover{*} t$
(resp. $\pi \in s \leftrightover{+} t$). 

A \emph{normal form} is a term, $t \in \rwtermss$, such that no subterm
of $t$ can be rewritten. A term that is not a normal form, i.e., one
with a subterm that \emph{can} be rewritten, is called
\emph{reducible}. We denote the set of all normal forms for $R$
by $\NF$, or simply $NF$. A rewrite system $R$ is $UN^=$
if it is \emph{not} the case that $R$  has two distinct normal forms,
$M$ and $N$, such that $M \leftrightover{*} N$. If such a pair exists,
then we say that the pair, $\langle M,N \rangle$, is a \emph{witness}
to non-$UN^=$. The \emph{size of a witness}, denoted $|\langle M,N
\rangle|$, is $|M|+|N|$. A \emph{minimal witness} is a witness with
minimal size. Finally, we define $SubMinWit_R$ to be set of all terms
$M'$ such that $\langle M,N \rangle$ is a minimal witness, and $M'$ is
a subterm of $M$.

%
% preliminary results
%

\section{Preliminary Results}
We begin with a few simple results on when rules apply. They are 
used throughout the paper to show that normal
forms are preserved under certain transformations. Before we begin,
notice that it is relatively simpler to preserve normal forms when the
relevant TRS is linear. For instance, imagine any {\em flat and linear} TRS 
such that $f(g(a),h(b))$ is a normal form. Since $g(a)$ is evidently a
normal form, 
$f(g(a),g(a))$ would also be a normal form, when the TRS is 
linear. If the TRS is not linear, then there could be a rule of the
form $f(x,x) \rightarrow t$, making $f(g(a),g(a))$ reducible. The
results below handle such complications presented by
non-linear rules.   

\begin{defi}

Let $R$ be a rewrite system, and let $l \rightarrow r = \rho \in R$ be
a rule. The \emph{pattern of $\rho$}, denoted $Patt(\rho)$, is a set
of equations $\{i = j~|~~l|_i = l|_j,l|_i,l|_j \in X\}$. 

\end{defi}

\begin{defi}
Let $t \in \rwtermss$ be a term with $root(l) = root(t)$. If $A =
\{i_1,i_2,\ldots,i_k\}$ is the set of positions that appear in
equations in $Patt(\rho)$, then the \emph{pattern of $t$ with respect
  to $\rho$}, denoted $Patt_\rho(t)$, is the set $\{i_a = i_b ~|~ t|_{i_a}
= t|_{i_b},i_a,i_b \in A\}$. 
\end{defi}
Note that $Patt_\rho(t)$ is undefined if $root(l) \neq root(t)$. 
\begin{lemma} \label{lem:apply}

Let $R$ be a flat TRS. Let $t \in \rwtermss$ be a term, and let $l
\rightarrow r = \rho \in R$ be a rule. Then $\rho$ can be applied to
$t$ at $\lambda$ if, and only if, (i) $l|_i = t|_i$ whenever $l|_i$ is
a constant, and (ii) $Patt_\rho(t)$ is defined and $Patt(\rho)
\subseteq Patt_\rho(t)$. 

\end{lemma}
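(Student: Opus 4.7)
The plan is to prove both directions directly from the definition of rule application, exploiting heavily that $l$ has height at most $1$ (so every position of $l$ other than $\lambda$ holds either a constant or a variable).

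For the forward direction, suppose $\rho$ applies at $\lambda$, so there is a substitution $\sigma$ with $l\sigma = t$. First I would observe that $\mathit{root}(l\sigma) = \mathit{root}(l)$ because $l$ is not itself a variable, so $\mathit{root}(t) = \mathit{root}(l)$ and $\mathit{Patt}_\rho(t)$ is defined. Condition (i) is immediate: at any position $i$ where $l|_i$ is a constant $c$, $t|_i = (l\sigma)|_i = c = l|_i$. For (ii), take $i = j \in \mathit{Patt}(\rho)$, so $l|_i = l|_j = x \in X$; then $t|_i = x\sigma = t|_j$, giving $i = j \in \mathit{Patt}_\rho(t)$.

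For the backward direction, I would explicitly build a witnessing substitution $\sigma$. For each variable $x$ occurring in $l$, pick any position $i$ with $l|_i = x$ and set $\sigma(x) = t|_i$; extend $\sigma$ as the identity elsewhere. The main thing to check, and the only step with any content, is that this $\sigma$ is well-defined: if $l|_i = l|_j = x$ then $i = j \in \mathit{Patt}(\rho) \subseteq \mathit{Patt}_\rho(t)$ by assumption (ii), so $t|_i = t|_j$ and the choice of representative does not matter. With $\sigma$ in hand, I would verify $l\sigma = t$ by comparing roots (equal because $\mathit{Patt}_\rho(t)$ is defined) and then comparing each depth-$1$ subterm: if $l|_i$ is a variable, $(l\sigma)|_i = t|_i$ by construction, and if $l|_i$ is a constant, $(l\sigma)|_i = l|_i = t|_i$ by condition (i).

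The only mildly subtle point, and the one I would expect to be a minor obstacle, is the well-definedness of $\sigma$ above; everything else is a bookkeeping check enabled by the flatness of $l$. The boundary case where $l$ is itself a constant is also worth a brief sentence: then $\mathit{Patt}(\rho)$ and $\mathit{Patt}_\rho(t)$ are both empty (assuming $\mathit{root}(l) = \mathit{root}(t)$), (ii) is vacuous, and (i) applied at position $\lambda$ forces $t = l$, so the equivalence still holds.
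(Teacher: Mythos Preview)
Your proof is correct and follows essentially the same approach as the paper's: in one direction you read off (i) and $Patt(\rho)\subseteq Patt_\rho(t)$ from $l\sigma=t$, and in the other you define $\sigma(x)=t|_i$ for any $i$ with $l|_i=x$, using $Patt(\rho)\subseteq Patt_\rho(t)$ to check well-definedness. The only differences are cosmetic---you treat the directions in the opposite order, spell out the verification of $l\sigma=t$ componentwise, and add a remark on the case where $l$ is a constant---none of which changes the argument.
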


\proof Assume that (i) and (ii) are satisfied. Since (i) is satisfied
and $Patt_\rho(t)$ is defined, all we have to show is that there exists
a substitution, $\sigma$, such that $l|_i\sigma = t|_i$ whenever $l|_i$
is a variable. We would like to define $x\sigma = t|_i$ whenever $l|_i
= x$, but if $l|_i = l|_j = x$, then $t|_i = l|_i\sigma = l|_j\sigma =
t|_j$, and hence it needs to be the case that $i = j \in
Patt_\rho(t)$. But if $i = j \in Patt(\rho)$ and (ii) is satisfied,
then we know that $i = j \in Patt_\rho(t)$. So, we can consistently
define $\sigma$ as above. Clearly, $l\sigma = t$, and thus $\rho$ can
be applied to $t$ at $\lambda$.

Now assume that there exists a substitution, $\sigma$, with $l\sigma =
t$. Obviously, $Patt_\rho(t)$ is defined and $l|_i = t|_i$ whenever
$l|_i$ is a constant, and so we need to show that $Patt(\rho)
\subseteq Patt_\rho(t)$. Say $i = j \in Patt(\rho)$. Then $l|_i =
l|_j$, and hence $t|_i = l|_i\sigma = l|_j\sigma = t|_j$. Therefore,
$i = j \in Patt_\rho(t)$, and $Patt(\rho) \subseteq Patt_\rho(t)$. \qed

Consider the term $f(a,x,x,g(b))$. Let's assume that it is a normal
form. We want to know if altering depth-$1$ subterms can make the term
reducible. Clearly, replacing $x$ with a constant could
\emph{potentially} make the term reducible, depending on the rules in
the rule set. But what about replacing any of the depth-$1$ subterms
with a normal form containing a fresh variable? Notice that such a
replacement could not make condition (i) of the above lemma true if it
had been false. But what if condition (i) is true and condition (ii)
is false? Could replacing a depth-$1$ subterm, or even several
depth-$1$ subterms, with terms containing fresh variables make
condition (ii) true? This question is answered by the following
proposition.  

\begin{proposition} \label{prop:properties}

Let $R$ be a flat TRS, and let $M = f(s_1,\ldots,s_m)$ be a normal form for $R$. Let $S = \{t_{i_1},\ldots,t_{i_n}\}$ be a set of normal forms, where $n \leq m$ and each term contains at least one fresh variable (relative to $M$). Further, say that $t_{i_j} \neq t_{i_{k}}$ whenever $s_{i_j} \neq s_{i_{k}}$ for all $i_j,i_k \in \{i_1,\ldots,i_n\}$. If $M'$ is what one obtains from $M$ by replacing each $s_{i_j}$ with $t_{i_j}$, then $M' \in NF_R$.

\end{proposition}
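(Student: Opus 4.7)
The plan is to argue by contradiction: assume $M'$ is reducible and derive that $M$ itself must be reducible, contradicting the hypothesis that $M$ is a normal form. Throughout I will apply Lemma~\ref{lem:apply} as the characterization of applicability at the root.

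First I would localize where a reduction in $M'$ could possibly occur. Since $R$ is flat, any redex has height at least $1$, so it must sit at a position of height $\geq 1$ in $M'$. The depth-$1$ subterms of $M'$ are either one of the unreplaced $s_k$'s (each of which is a subterm of the normal form $M$, hence itself a normal form with no internal redex) or one of the $t_{i_j}$'s (which is a normal form by hypothesis). Hence no redex can lie strictly below the root of $M'$. It follows that the assumed reduction must take place at the root: there is a rule $\rho = l \to r \in R$ applicable to $M'$ at $\lambda$. The goal is then to show that $\rho$ is also applicable at the root of $M$.

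I would verify the two conditions of Lemma~\ref{lem:apply} for $M$. For condition (i), if $l|_i$ is a constant, then applicability to $M'$ forces $M'|_i = l|_i$ to be that same constant. Since every $t_{i_j}$ contains a fresh variable, no $t_{i_j}$ is a constant, so position $i$ cannot be among the replaced positions $i_1,\dots,i_n$; hence $M|_i = M'|_i = l|_i$, and (i) transfers to $M$. For condition (ii), fix an equation $i = j \in Patt(\rho)$, so that $M'|_i = M'|_j$. I split into three cases depending on how many of $i,j$ are replaced positions:
\begin{itemize}
\item If neither is replaced, $M|_i = M'|_i = M'|_j = M|_j$.
\item If exactly one is replaced, say $i = i_a$ while $j$ is not, then $t_{i_a} = M'|_i = M'|_j = M|_j$; but $t_{i_a}$ contains a variable fresh with respect to $M$, while $M|_j$ does not, a contradiction, so this case cannot occur.
\item If both are replaced, say $i = i_a$ and $j = i_b$, then $t_{i_a} = t_{i_b}$, and the contrapositive of the hypothesis ``$t_{i_j} \neq t_{i_k}$ whenever $s_{i_j} \neq s_{i_k}$'' yields $s_{i_a} = s_{i_b}$, i.e.\ $M|_i = M|_j$.
\end{itemize}
In every possible case $M|_i = M|_j$, so $Patt(\rho) \subseteq Patt_\rho(M)$, establishing (ii). By Lemma~\ref{lem:apply}, $\rho$ applies to $M$ at $\lambda$, contradicting $M \in NF_R$.

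The step I expect to be the most delicate is the mixed case of (ii), since it is where non-linearity really bites: one must rule out accidental coincidences between a freshly injected term and an original subterm of $M$, and separately rule out that two distinct subterms of $M$ get collapsed to equal replacements. The freshness hypothesis handles the former and the injectivity-like hypothesis on $s_{i_j} \mapsto t_{i_j}$ handles the latter; making this distinction explicit is the heart of the proof.
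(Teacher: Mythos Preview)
Your proof is correct and follows essentially the same strategy as the paper's: both reduce the question to showing $Patt_\rho(M') \subseteq Patt_\rho(M)$ via the same three-case split on whether the relevant positions are replaced, with you arguing the contrapositive direction and being more explicit about condition~(i) and about localizing the redex to the root (the paper defers these to ``the above observations'' preceding the proposition). One minor slip: the sentence ``since $R$ is flat, any redex has height at least $1$'' is false as stated---a flat rule can have a constant left-hand side---but your actual localization argument (each depth-$1$ subterm of $M'$ is a normal form, hence any redex lies at $\lambda$) is correct and does not rely on that claim.
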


\proof We say that $M' = f(s_1',\ldots,s_m')$, where $s_q' =
\left\{\begin{array}{cc} t_q &\textrm{ if }q \in \{i_1,\ldots,i_n\} \\
s_q & \textrm{ otherwise} \end{array}\right.$. By Lemma
\ref{lem:apply} and the above observations, we simply need to
demonstrate, for an arbitrary rule $\rho \in R$, that if $Patt(\rho)
\not\subseteq Patt_\rho(M)$, then $Patt(\rho) \not\subseteq
Patt_\rho(M')$ (i.e. if $\rho$ cannot be applied to $M$, then it
cannot be applied to $M'$, making $M'$ a normal form).

So, assume that $Patt(\rho) \not\subseteq Patt_\rho(M)$. We need to
show that $s_j' \neq s_k'$ whenever $s_j \neq s_k$. We consider three
cases: (i) $s_j',s_k' \notin S$, (ii) $s_j' \in S$, $s_k' \notin S$,
and (iii) $s_j',s_k' \in S$. In case (i), $s_j = s_j'$ and $s_k =
s_k'$, so clearly $s_j' \neq s_k'$ whenever $s_j \neq s_k$. In case
(ii), $s_j'$ contains a fresh variable, whereas $s_k' = s_k$ does not,
so $s_j' \neq s_k'$. Hence, it is (vacuously) the case that $s_j' \neq
s_k'$ whenever $s_j \neq s_k$. Since case (iii) is an hypothesis, we
see that, in all cases, $s_j' \neq s_k'$ whenever $s_j \neq s_k$, and
hence $Patt_\rho(M') \subseteq Patt_\rho(M)$. Therefore, $Patt(\rho)
\not\subseteq Patt_\rho(M')$, and $M' \in NF_R$. \qed

\begin{lemma} \label{lem:violate minimality}
 
If $R$ is any TRS such that $f(t_1,\ldots,t_m) \in SubMinWit_R$, then
$t_i \leftrightover{*}_R t_j$ is impossible for $t_i \neq t_j$. This
is equivalent to saying that there is no term $s$ that is equivalent to both $t_i$ and
$t_j$ via $R$. 

\end{lemma}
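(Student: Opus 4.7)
The plan is to derive a contradiction with the minimality of the witness by exhibiting $\langle t_i, t_j \rangle$ itself as a strictly smaller witness to non-$UN^=$. Fix a minimal witness $\langle M, N \rangle$ with $f(t_1,\ldots,t_m)$ a subterm of $M$, and suppose toward a contradiction that $t_i \neq t_j$ yet $t_i \leftrightover{*}_R t_j$; this forces $i \neq j$ and hence $m \geq 2$.

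First I would observe that because rewriting is closed under context, any rule applicable at a position inside a subterm of $M$ is also applicable at the corresponding position of $M$; hence every subterm of a normal form is a normal form, and in particular $t_i$ and $t_j$ are normal forms. Being distinct and equivalent, they form a witness $\langle t_i, t_j \rangle$ to non-$UN^=$. Next I would compare sizes: from $|f(t_1,\ldots,t_m)| = 1 + \sum_{k=1}^m |t_k|$ and $i \neq j$,
\[
|t_i| + |t_j| \;\leq\; \sum_{k=1}^{m} |t_k| \;=\; |f(t_1,\ldots,t_m)| - 1 \;\leq\; |M| - 1 \;<\; |M| + |N|,
\]
where the final strict inequality uses $|N| \geq 1$. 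This contradicts the minimality of $\langle M, N \rangle$, proving the main claim. The equivalent reformulation in terms of a common equivalent $s$ is immediate from symmetry and transitivity of $\leftrightover{*}_R$: any such $s$ yields $t_i \leftrightover{*}_R t_j$, and conversely $t_i \leftrightover{*}_R t_j$ makes $s := t_i$ equivalent to both.

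I do not anticipate a hard step: the only ingredients are subterm-closure of normal forms and elementary size bookkeeping. What is worth highlighting is that no structural hypothesis on $R$ (flatness, shallowness, or linearity) is needed, because the smaller witness is extracted directly from the existing minimal witness without performing any substitution inside $M$ or $N$; this sidesteps the normal-form preservation subtleties that motivated Proposition \ref{prop:properties}, and is the reason the lemma is stated for an arbitrary TRS.
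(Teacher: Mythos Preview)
Your proof is correct and follows essentially the same approach as the paper: assume $t_i \leftrightover{*}_R t_j$ with $t_i \neq t_j$, observe that $\langle t_i, t_j \rangle$ is itself a witness, and derive a contradiction from the strict size inequality $|t_i|+|t_j| < |f(t_1,\ldots,t_m)| \leq |M|+|N|$. You are in fact slightly more careful than the paper, which leaves implicit the observation that $t_i$ and $t_j$ are normal forms because they are subterms of a normal form.
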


\proof Let $\langle M,N \rangle$ be a minimal witness to non-$UN^=$
for $R$, and say that $f(t_1,\ldots,t_m)$ is a subterm of $N$. Assume
that the lemma is false, i.e., there is a term, $s$, such that $s
\leftrightover{*}_R t_i$ and $s \leftrightover{*}_R t_j$ with $t_i
\neq t_j$. Then we would have $t_i \leftrightover{*} t_j$. Since
$|t_i|+|t_j| < |f(t_1,\ldots,t_m)| < |M|+|N|$, we see that $\langle
t_i,t_j \rangle$ violates the minimality of $\langle M,N \rangle$, and
hence the lemma must be true. \qed

%
% NF equivalent to constants
%

\subsection{Normal Forms Equivalent to Constants}
Let $E$ be a finite set of equations. Following the authors of
\cite{CHJ}, we extend $E$ to $\widehat{E}$ by closing under the
following inference rules: \begin{enumerate} \item $\displaystyle
  \frac{g = d\textrm{,  }l = r}{d \sigma = r \sigma}$ if $l,g \notin
  X$ and $\sigma = mgu(l,g)$ \item $\displaystyle \frac{x = d\textrm{,
    }y = r}{d = r\{y \mapsto x\}}$ if $y \in X$ and $x \in \constants
  \cup X$ \item $\displaystyle \frac{g[a] = d\textrm{,  }a = b}{g[b] =
    d}$ if $a,b \in \constants$ \end{enumerate} Notice that if $E$ is
flat, then $\widehat{E}$ is flat, as well. 

We can think of a rewrite system as a set of equations: if $s
\rightarrow t$ is a rule in $R$, then $s \leftrightarrow t$ is its
corresponding equation. We write $E_R$ for the set of equations
obtained in this way from a rewrite system $R$. Clearly, if $s$ and
$t$ are terms in $\rwtermss$, then they are $R$-equivalent if and only
if they are $E_R$ equivalent. Also, from \cite{CHJ} we know that terms
are $E_R$ equivalent if, and only if, they are
$\widehat{E_R}$-equivalent. In \cite{CHJ}, the authors show that, if
$R$ is a shallow TRS and $s,t \in \rwtermss$, then there is a procedure
that produces, for any proof, $\pi \in s ~\leftrightover{*}_R~ t$,
over $R$, a new proof, which is denoted by $\pi_{1rr} \in s
~\leftrightover{*}_{\widehat{E_R}}~ t$, over $\widehat{E_R}$, such
that there is at most one root rewrite step in $\pi_{1rr}$. 

%As explained in the introduction, we need to show that if one of the
%components of a minimal witness has a big (height-wise) subterm that
%is equivalent to a constant, then there is a reasonably small witness
%(the precise meaning of ``big'' and ``small'' will be given later). In
%order to do this, we will need to prove a proposition. The key idea in
%the proposition is that either we can find a small witness, or there
%is a proof between the constant and the big normal form that shows
%(inductively) that there are more normal forms equivalent to constants
%than there are constants.  

Consider the following example: $R = \{f(x,x) \rightarrow c, f(x,x)
\rightarrow g(a,x), g(a,x) \rightarrow g(a,x), a \rightarrow h(b), b
\rightarrow h(c)\}$. It is easy to check that $\widehat{E_R} = E_R
\cup \{c \leftrightarrow g(a,x)\}$. We  use $\widehat{E_R}$ to
search for a minimal witness to non-$UN^=$ for $R$; in particular, we
will use the fact that for every proof $s \leftrightover{*}_R t$,
there is a proof $s \leftrightover{*}_{\widehat{E_R}} t$  with at most
one root rewrite.   

Clearly, $c$ is an $R$-normal form, so if we are looking for a minimal
witness to non-$UN^=$ for $R$, $\langle c,? \rangle$ might be a good
first guess. We know that $c \leftrightarrow_{\widehat{E_R}} f(x,x)$,
so maybe $\langle c,f(u,v) \rangle$ is a minimal witness, for some
normal forms $u$ and $v$. This is not possible. First, notice that
$f(x,x)$ appears on the LHS of a rule, so $f(t,t)$ cannot be a normal
form, for arbitrary term $t$. Second, notice that if $f(t,t)$ is
equivalent to another normal form, then we can assume it is of the
form $f(u,v)$, because we have already ``used up'' our only root
rewrite by using $c \leftrightarrow_{\widehat{E_R}} f(x,x)$. So, maybe
we can plug some term, $t$, into $x$, and then rewrite one instance of
it to a normal form $u$, and another instance of it to a normal form
$v$, obtaining a minimal witness of the form $\langle c,f(u,v)
\rangle$? This cannot be the case, because if $\langle c,f(u,v)
\rangle$ is a minimal witness, then (by Lemma \ref{lem:violate
  minimality} and the fact that $u \leftrightover{*} v$) $\langle u,v
\rangle$ would violate the minimality of $\langle c,f(u,v)
\rangle$. So, we should consider $c \leftrightarrow_{\widehat{E_R}}
g(a,x)$ as \emph{the} (one and only) rewrite step in our proof. We
know that $a$ is not a normal form, and must, therefore, be rewritten
to one - $h(h(c))$. But what about $x$? Should we plug anything into
it? Say we were to plug $t$ into $x$, and then rewrite $t$ to some
normal form, $u$. This would be unnecessary, because non-linearity is
not an issue here, and so we can leave $x$ as it is. So, $\langle
c,g(h(h(c)),x) \rangle$ is a minimal witness, and the relevant proof
looks like: $c \leftrightarrow_{\widehat{E_R}} g(a,x)
\leftrightarrow_{\widehat{E_R}} g(h(b),x)
\leftrightarrow_{\widehat{E_R}} g(h(h(c)),x)$. % \]  

Now, here is the interesting part. Notice that we have \emph{four}
$R$-normal forms equivalent to constants, but only \emph{three}
constants in $R$, i.e, $c \leftrightover{*}_{\widehat{E_R}} c$, $h(c)
\leftrightover{*}_{\widehat{E_R}} b$, $h(h(c))
\leftrightover{*}_{\widehat{E_R}} a$, and $g(h(h(c)),x)
\leftrightover{*}_{\widehat{E_R}} c$. From the Pigeonhole Principle,
we can conclude that there must be some constant in $R$ that is
equivalent to two distinct normal forms (of course, we already knew
this, but in general this technique will be useful). We 
generalize the lessons learned from this example in the following
results. 

\begin{lemma} \label{lem:no big plug in} Let $R$ be a flat TRS. Let
  $\langle M_0,M_1 \rangle$ be a minimal witness to non-$UN^=$ for
  $R$, and say $M = f(t_1,\ldots,t_m)$ is a subterm of $M_0$. Let $c$
% NR begin
  be a constant, and let $c \leftrightover{r}_{\widehat{E_R}}
  f(s_1,\ldots,s_m)\leftrightover{*}_{\widehat{E_R}}
% NR end
  f(t_1,\ldots,t_m) = M$ be a proof with a single root rewrite. If
  $s_i$ is not a constant, then $height(t_i) = 0$. \end{lemma}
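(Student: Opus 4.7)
\textit{Plan.} I would argue by contradiction: assume $s_i$ is not a constant and, toward a contradiction, that $height(t_i) \geq 1$. The single root rewrite must apply some flat equation $c = f(s_1^0, \ldots, s_m^0) \in \widehat{E_R}$ under a substitution $\sigma$, so $s_j = s_j^0\sigma$; since $\widehat{E_R}$ is flat and $s_i$ is not a constant, $s_i^0 = x'$ must be a variable. Because every rewrite after the root step is non-root, and non-root rewrites operate on pairwise disjoint depth-$1$ positions, the tail of the proof decomposes into independent subproofs $s_j \leftrightover{*}_{\widehat{E_R}} t_j$ for each $j = 1, \ldots, m$. In particular, $s_i \leftrightover{*}_R t_i$.

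\textit{Main step.} In the principal case where $s_i = x$ is itself a variable (which is the canonical choice of substitution on $x'$), the contradiction is immediate. We have $x \leftrightover{*}_R t_i$; both $x$ and $t_i$ are normal forms ($x$ trivially, since left-hand sides are never variables; $t_i$ as a depth-$1$ subterm of $M$, which is itself a subterm of the normal form $M_0$); and $t_i \neq x$ because $height(t_i) \geq 1 > 0 = height(x)$. Hence $\langle t_i, x \rangle$ is a witness to non-$UN^=$. Its size is $|t_i| + 1 \leq |M| \leq |M_0| < |M_0| + |M_1|$ (using that $t_i$ is a strict subterm of $M \subseteq M_0$ and $|M_1| \geq 1$), so it strictly beats the minimal witness $\langle M_0, M_1 \rangle$, the desired contradiction. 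We conclude $height(t_i) = 0$.

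\textit{Main obstacle.} The subtle point is the auxiliary case where $\sigma(x')$ is compound, so $s_i$ is not itself a variable. I would reduce this to the principal case by passing to the proof obtained by replacing $\sigma$ with the substitution $\sigma^*$ that sends $x'$ to a fresh variable $y$ (agreeing with $\sigma$ elsewhere): the new root step produces $f(s_1^*, \ldots, s_m^*)$ with $s_j^* = y$ at every $j$ where $s_j^0 = x'$. The difficulty is to re-patch the non-root portion of the proof so that one still arrives at $M$, or at a variant from which the same contradiction can be drawn. Using substitution-closure of equational proofs together with Lemma~\ref{lem:violate minimality}---which, when $x'$ occurs at several positions $j \neq k$, forces the corresponding $t_j$ and $t_k$ to coincide---this patching can be carried out, after which the principal-case argument applies verbatim.
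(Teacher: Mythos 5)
Your principal case (where $s_i$ is literally a variable $x$) is correct, and is in fact a slightly more direct route than the paper's: from the component proof $x \leftrightover{*}_{\widehat{E_R}} t_i$ you obtain the witness $\langle t_i,x\rangle$ of size $|t_i|+1 \leq |M| \leq |M_0| < |M_0|+|M_1|$, contradicting minimality. The gap is in the compound case, which is where the real content of the lemma lies. After you replace $\sigma$ by $\sigma^*$ with $x' \mapsto y$ fresh, the patched proof does \emph{not} put you back in the principal case: that case needed $y \leftrightover{*}_{\widehat{E_R}} t_i$, and a fresh variable $y$ is in general not equivalent to $t_i$. The component proof you actually possess starts from the compound term $s_i = x'\sigma$, not from $y$, and you cannot substitute into it, because $x'$ is already instantiated rather than free in that proof. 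So ``the principal-case argument applies verbatim'' is false as stated, and the contradiction must come from elsewhere.

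What the patching genuinely yields is a term $M^* = f(t_1^*,\ldots,t_m^*)$ with $t_j^* = y$ at the $x'$-positions and $t_j^* = t_j$ elsewhere, satisfying $M^* \leftrightover{*}_{\widehat{E_R}} c \leftrightover{*}_{\widehat{E_R}} M$. To finish you must (a) verify that $M^*$ is a normal form --- this is Proposition~\ref{prop:properties}, and its hypothesis is exactly where Lemma~\ref{lem:violate minimality} enters, guaranteeing that the $t_j$ at distinct $x'$-positions coincide so that collapsing them all to the single variable $y$ is legitimate --- and then (b) observe that $\langle C[M^*], M_1\rangle$, where $M_0 = C[M]$, is a witness with $|C[M^*]| < |M_0|$ because $|y| = 1 < |t_i|$, contradicting minimality of $\langle M_0,M_1\rangle$. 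The closing move is thus a size comparison on the whole witness, not a component-wise equivalence $y \leftrightover{*}_{\widehat{E_R}} t_i$; this is precisely what the paper's proof does (uniformly, replacing every non-constant $s_j$ and the corresponding $t_j$ by fresh variables, with no case split on whether $s_i$ is a variable). You identified the right construction and the right auxiliary lemmas, but the step you defer is the heart of the proof, and the way you propose to conclude it does not go through.
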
  

\proof Let $S_{const}$ be the set of positive integers, $i$, such that
$s_i \in \constants$. If none of the $s_i$'s is a variable, then there
is nothing to show; so, assume at least one of the $s_i$'s is a
variable. Now, let \[s_j' = \left\{\begin{array}{cc} s_j &\textrm{ if
}j \in S_{const} \\ x_{s_j} &\textrm{ otherwise}
\end{array}\right. ~~\textrm{ and } \qquad t_j' =
\left\{\begin{array}{cc} t_j &\textrm{ if }j \in S_{const} \\ x_{s_j}
&\textrm{ otherwise} \end{array}\right.\] where $x_{s_j}$ is a fresh
variable not appearing in $M_0$ or $M_1$, and $x_{s_i} = x_{s_j}$ if
and only if $s_i = s_j$. We show that (i) $f(s_1',\ldots,s_m')
\leftrightover{*}_{\widehat{E_R}} f(t_1',\ldots,t_m')$, (ii)
$f(t_1',\ldots,t_m') \in NF_R$, and (iii) for $i \notin S_{const}$,
$height(t_i) = 0$. 

\emph{Part (i)}. If $j \notin S_{const}$, then $s_j' = t_j' =
x_{s_j}$. So, say $j \in S_{const}$. In this case, $s_j' = s_j
\leftrightover{*}_{\widehat{E_R}} t_j = t_j'$. So,
$f(s_1',\ldots,s_m') \leftrightover{*}_{\widehat{E_R}}
f(t_1',\ldots,t_m')$. \emph{Part (ii)}. Let $j,j' \notin S_{const}$,
and say $t_j \neq t_{j'}$. In order to apply Proposition
\ref{prop:properties}, we need to show that $t_j' \neq t_{j'}'$. From
Lemma \ref{lem:violate minimality}, we know that $t_j 
\leftrightover{*}_{\widehat{E_R}} s_j \neq s_{j'}
\leftrightover{*}_{\widehat{E_R}} t_{j'}$,
and hence $t_j' = x_{s_j} \neq x_{s_{j'}} = t_{j'}'$. Therefore, we
can apply Proposition \ref{prop:properties} to obtain that
$f(t_1',\ldots,t_m') \in NF_R$. \emph{Part (iii)}. Notice that, by (i)
and $f(s_1',\ldots,s_m') \leftrightover{*}_{\widehat{E_R}} c$
% Why???
, we have $f(t_1,\ldots,t_m)
\leftrightover{*}_{\widehat{E_R}} c \leftrightover{*}_{\widehat{E_R}}
% NR begin
f(t_1',\ldots,t_m') = N$. Also, since $N$ contains at least one fresh
variable not appearing in $M_0$ or $M_1$, we know that $M \neq N$ and
$C[N] \neq M_0$ or $M_1$, where $C[]$ is a context and $M_0 = C[M]$. Hence $\langle C[N],M_1 \rangle$ is a
% NR end
witness to non-$UN^=$, with $|C[N]| \leq |M_0|$. But $\langle M_0,M_1
\rangle$ is a minimal witness, so $|C[N]| = |C[M]|$ and $|N| =
|M|$. Since $|t_i'| = 1$ for all $i \notin S_{const}$, it must be the
case that $|t_i| = 1$. Thus, we have that $height(t_i) = height(t_i')
= 0$ for all $i \notin S_{const}$. \qed  

\begin{corollary} \label{cor:no big plug in} Under the same assumptions
  as Lemma~\ref{lem:no big plug in} plus the assumption that at least
  one of the $s_i$'s is a constant, 
  there is a $j$ such that $s_j \in \constants$ and $height(t_j) =
  height(f(t_1,\ldots,t_m)) - 1$ with $1 \leq j \leq m$. \end{corollary}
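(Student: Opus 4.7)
The plan is to combine Lemma~\ref{lem:no big plug in} with a simple observation about the height of $M$. Namely, since $M = f(t_1,\ldots,t_m)$, we have $height(M) = 1 + \max_{1 \leq i \leq m} height(t_i)$, and for every $j$ the subterm relation gives $height(t_j) \leq height(M) - 1$. So it suffices to produce a constant $s_j$ whose ``partner'' $t_j$ realizes this maximum.

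Let $j^*$ be any index achieving $height(t_{j^*}) = \max_i height(t_i) = height(M) - 1$. I would split into two cases based on whether $s_{j^*}$ is a constant. If $s_{j^*} \in \constants$, then $j = j^*$ already satisfies the conclusion and we are done.

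Otherwise $s_{j^*}$ is not a constant (it must then be a variable, since $f(s_1,\ldots,s_m)$ is flat because $\widehat{E_R}$ is flat). Applying Lemma~\ref{lem:no big plug in} to this $s_{j^*}$ forces $height(t_{j^*}) = 0$, so $height(M) = 1$. Now I invoke the extra hypothesis of the corollary: there is some index $j$ with $s_j \in \constants$. For this $j$, the general bound gives $height(t_j) \leq height(M) - 1 = 0$, so $height(t_j) = 0 = height(M) - 1$, and this $j$ witnesses the conclusion.

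There is no real obstacle here beyond checking that $f(s_1,\ldots,s_m)$ is indeed flat (so that each $s_i$ is a constant or a variable, and Lemma~\ref{lem:no big plug in} applies in its contrapositive form); this is immediate from the fact that $R$ flat implies $\widehat{E_R}$ flat, as noted earlier in the section.
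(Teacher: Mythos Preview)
Your proof is correct and follows essentially the same approach as the paper: both use Lemma~\ref{lem:no big plug in} to force $height(t_i)=0$ whenever $s_i\notin\constants$, and then observe that the maximum $height(t_i)$ must therefore be realized at some index with $s_j\in\constants$. The paper compresses your two cases into the single remark that $height(t_i)\le height(t_j)$ whenever $s_i\notin\constants$ and $s_j\in\constants$, but the content is identical.
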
 

\proof Since $height(t_i) = 0$ whenever $s_i \notin \constants$, we know that $height(t_i) \leq height(t_j)$ whenever $s_i \notin \constants$ and $s_j \in \constants$. So, amongst the direct subterms of $f(t_1,\ldots,t_m)$ with maximal height, there must be one, $t_j$, such that $s_j \in \constants$. \qed

% NR begin - The note after the proposition has been added to the proof. 
\begin{proposition} \label{prop:small witness} Let $R$ be a flat TRS, and let $c \in \constants$. Let $\langle M,N \rangle$ be a minimal
witness, and let $N'$ be a subterm of $N$ such that $height(N') = k$. Further, let $\pi \in c \leftrightover{*} N'$ be a proof
over $R$. Then we can find either (i) $1+k$ distinct normal forms equivalent to constants, the normal forms having heights $0,1,\ldots,k$, or (ii) a witness, $\langle N_0,N_1 \rangle$, to non-$UN^=$, such that $N_0$ and $N_1$ are flat. \end{proposition}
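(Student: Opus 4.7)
The plan is to prove the proposition by induction on $k=\mathrm{height}(N')$, using the result of Comon et al.\ \cite{CHJ} at each step to convert the given proof $\pi$ over $R$ into a proof $\pi' \in c \leftrightover{*}_{\widehat{E_R}} N'$ containing at most one root rewrite. The base case $k=0$ is immediate: $N'$ is a height-$0$ subterm of the normal form $N$, hence a normal form, and $N' \leftrightover{*} c$ by hypothesis, so (i) holds with $N'$ as the single required normal form.

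For the inductive step $k \geq 1$, since $c$ has height $0$ and $N'$ has positive height their root symbols differ, so the one allowed root rewrite in $\pi'$ must be present, and I would factor
\[
c \;\leftrightover{r}_{\widehat{E_R}}\; f(s_1,\ldots,s_m) \;\leftrightover{*}_{\widehat{E_R}}\; f(t_1,\ldots,t_m) \;=\; N',
\]
where the second segment has no root rewrite so that $s_i \leftrightover{*}_{\widehat{E_R}} t_i$ for each $i$. The first case is when some $s_i$ is a constant. Then Corollary~\ref{cor:no big plug in} supplies an index $j$ with $s_j \in \constants$ and $\mathrm{height}(t_j) = k-1$; since $t_j$ is a subterm of $N$, the inductive hypothesis applied to $s_j$, $t_j$, and the induced proof returns either a flat witness (so (ii) holds and we are done) or $k$ distinct normal forms at heights $0,1,\ldots,k-1$ equivalent to constants. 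In the latter subcase I adjoin $N'$, which is a height-$k$ normal form equivalent to $c$ and differs from the inductively produced terms by height, giving the $k+1$ normal forms required in (i).

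The second case is when no $s_i$ is a constant. Then every $s_i$ is a variable, and Lemma~\ref{lem:no big plug in} forces every $t_i$ to have height $0$, so $k=1$. If some $t_j$ is a constant, it is a height-$0$ subterm of $N$ (hence a normal form) equivalent to the constant $t_j$; together with $N'$ it yields the two required normal forms of heights $0$ and $1$. Otherwise every $t_i$ is a variable, $N'$ is flat, and if $c$ is itself a normal form then $\langle c, N' \rangle$ is the flat witness demanded by (ii).

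The main obstacle is the residual subcase where every $t_i$ is a variable and $c$ is itself reducible. Here I would invoke the rule $c \to r_0$ that witnesses the reducibility of $c$ (necessarily with $r_0$ flat since $R$ is flat) together with the equation $c = f(s_1,\ldots,s_m)$ of $\widehat{E_R}$ used in the root rewrite, to extract a second flat normal form equivalent to $c$. Applying the construction in the proof of Lemma~\ref{lem:no big plug in} (replacing depth-$1$ subterms of $r_0$ by fresh variables with the required consistency) and invoking Proposition~\ref{prop:properties} to preserve normalness, one produces a flat normal form $\widetilde{N}$ with $\widetilde{N} \leftrightover{*} c$; Lemma~\ref{lem:violate minimality} ensures it cannot be identified with $N'$, whence $\langle \widetilde{N}, N' \rangle$ is the desired flat witness. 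If even this construction collapses, the minimality of $\langle M,N \rangle$ would itself be violated, closing the case analysis.
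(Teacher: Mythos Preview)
Your induction scaffold and Case~1 (some $s_i$ a constant) match the paper's proof. The divergence is in Case~2, where all $s_i$ are variables. You split into three subcases according to whether some $t_j$ is a constant and whether $c$ is a normal form; the first two subcases are fine, but the third (every $t_i$ a variable and $c$ reducible) has a real gap. Your proposed fix is to take a rule $c \to r_0$ and ``apply the construction in the proof of Lemma~\ref{lem:no big plug in}'' to $r_0$, invoking Proposition~\ref{prop:properties} to obtain a flat normal form $\widetilde{N}$. But Proposition~\ref{prop:properties} requires the starting term to already be a normal form, and the construction in Lemma~\ref{lem:no big plug in} crucially starts from a normal form that is a subterm of a minimal-witness component. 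Neither hypothesis is available for $r_0$: it is merely the right-hand side of some rule and may well be reducible, so the appeal to Proposition~\ref{prop:properties} is unjustified. The closing sentence (``if even this construction collapses, the minimality of $\langle M,N\rangle$ would itself be violated'') is not an argument.

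The paper handles Case~2 in one stroke, without your three-way split, and the key idea is one you do not use: show that the rule pattern $f(s_1,\ldots,s_m)$ itself is a normal form. Since $N' = f(t_1,\ldots,t_m)$ \emph{is} a normal form, Proposition~\ref{prop:properties} applies in the direction $N' \rightsquigarrow f(s_1,\ldots,s_m)$, provided $s_i \neq s_j$ whenever $t_i \neq t_j$; this follows from Lemma~\ref{lem:violate minimality} because $t_i \leftrightover{*} s_i\sigma$ and $t_j \leftrightover{*} s_j\sigma$, so $t_i \neq t_j$ forces $s_i\sigma \neq s_j\sigma$ and hence $s_i \neq s_j$. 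Once $f(s_1,\ldots,s_m)$ is known to be a normal form, a fresh variable renaming $f(s_1',\ldots,s_m')$ gives a second flat normal form, and since both are $\widehat{E_R}$-equivalent to $c$ the pair $\langle f(s_1,\ldots,s_m), f(s_1',\ldots,s_m')\rangle$ is the flat witness. This avoids any case analysis on whether $c$ is reducible.
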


\proof We proceed by induction on $height(N')$. For the base case we assume that $height(N') = 0$. If the proof is trivial, i.e.,
if $c = N'$, then we have $1 = 1+height(N')$ normal form ( with height zero) equivalent to a constant. So, assume that $\pi$ has at least one step.

We know that there is a proof, $\pi_{1rr} \in c \leftrightover{+}_{\widehat{E_R}} N'$, such that there is only one root rewrite
step in $\pi_{1rr}$. Since the first step in $\pi_{1rr}$ is necessarily a root rewrite, $\pi_{1rr}$ must have the form $c
% NR begin
\leftrightover{r} w\sigma = N'$, where the rule applied is $c \rightarrow w$ or $w \rightarrow c$, and $height(w) = 0$ (notice that if $c 
% NR end
\leftrightover{r} u \leftrightover{*} N'$ for some term $u$ with $height(u) > 0$, then we would need a second root rewrite to get
back to $N'$). If $w \in X$, then $x \leftrightarrow c \leftrightarrow y$, where $x,y$ are distinct variables. Therefore, $\langle
x,y \rangle$ is a witness to non-$UN^=$ with $x$ and $y$ flat. If $w \in \constants$, then we have found $1 = 1+height(N')$ normal
form (with height zero) equivalent to a constant.

For the inductive step, assume that $height(N') > 0$, and that the proposition holds for any height strictly less than
$height(N')$. Now, $\pi_{1rr}$ has the form \[ c \leftrightover{r}_{\widehat{E_R}} f(t_1,\ldots,t_m)
\leftrightover{*}_{\widehat{E_R}} f(u_1,\ldots,u_m) = N' \] and $t_i \leftrightover{*}_{\widehat{E_R}} u_i$ for $1 \leq i \leq m$.
% NR begin
We have two cases: (i) there is an $i$ such that $t_i \in \constants$, and (ii) there is no such $i$. For (i), by Corollary
\ref{cor:no big plug in}, there exists an $i$ such that $t_i$ is a constant and $height(u_i) = k-1$. So, we can apply the inductive hypothesis to conclude 
% NR end
that we have either (i) $1+(1+(height(N')-1)) = 1+height(N')$ distinct normal forms, with heights $0,1,\ldots,height(N')$, equivalent to constants
(the first $height(N')-1$ normal forms come from the inductive hypothesis, and the final normal form is $N'$ itself, which is equivalent to $c$), or (ii) a witness, $\langle N_0,N_1 \rangle$, to non-$UN^=$, such that $N_0$ and $N_1$ are flat.

%change

In case (ii), if $c \leftrightarrow_{\widehat{E_R}} f(s_1,\ldots,s_m)$ is the rule used for $c \leftrightarrow_{\widehat{E_R}} f(t_1,\ldots,t_m)$, then $s_i$ is a variable for $1 \leq i \leq m$. We need to show that $f(s_1,\ldots,s_m) \in NF_R$. From Lemma \ref{lem:violate minimality}, we know that $t_i \neq t_j$ whenever $u_i \neq u_j$ for $1 \leq i,j \leq m$. Since $t_i \neq t_j$ implies that $s_i \neq s_j$, we see that $s_i \neq s_j$ whenever $u_i \neq u_j$. We can assume that the variables $s_1,\ldots,s_m$ are fresh relative to $f(u_1,\ldots,u_m)$, and so we can replace $u_i$ with $s_i$ in $f(u_1,\ldots,u_m)$, obtaining $f(s_1,\ldots,s_m) \in NF_R$ by Proposition \ref{prop:properties}. Since $f(s_1,\ldots,s_m)$ is a normal form, we can replace the variables appearing in $f(s_1,\ldots,s_m)$ with fresh variables to produce a new normal form, $f(s_1',\ldots,s_m')$, such that $f(s_1,\ldots,s_m) \leftrightarrow_{\widehat{E_R}} c \leftrightarrow_{\widehat{E_R}} f(s_1',\ldots,s_m')$. So, $\langle f(s_1,\ldots,s_m),f(s_1',\ldots,s_m') \rangle$ is our witness with $f(s_1,\ldots,s_m)$ and $f(s_1',\ldots,s_m')$ flat. \qed
% NR end

\begin{corollary} \label{cor:small witness} Let $R$ be a flat TRS, and let $c \in \constants$. Let $\langle M,N \rangle$ be a minimal witness, and let $N'$ be a subterm of $N$, with $height(N') \geq |\constants|$. Further, let $\pi \in c \leftrightover{*}_R N'$ be a proof over $R$. Then we can find either (i) a witness, $\langle M_0,M_1 \rangle$, to non-$UN^=$, such that $M_0$ and $M_1$ are flat, or (ii) a witness, $\langle N_0,N_1 \rangle$, to non-$UN^=$, such that $height(N_0),height(N_1) \leq |\constants|$. \end{corollary}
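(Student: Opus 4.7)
The plan is to deduce this corollary directly from Proposition \ref{prop:small witness} by a pigeonhole argument, using the hypothesis $height(N') \geq |\constants|$ to force a collision among the normal forms it produces.

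First, I apply Proposition \ref{prop:small witness} to the data $c$, $\langle M,N\rangle$, $N'$, and $\pi$. Setting $k = height(N')$, the proposition yields one of two alternatives: either (a) a witness $\langle N_0,N_1\rangle$ to non-$UN^=$ in which both components are flat, or (b) a collection of $1 + height(N')$ pairwise distinct $R$-normal forms $v_0,v_1,\ldots,v_{height(N')}$, each equivalent to some constant of $R$, and with $height(v_i) = i$. If alternative (a) occurs, then $\langle N_0,N_1\rangle$ already satisfies conclusion (i) of the corollary (flat witnesses trivially have both heights at most $|\constants|$), so I am done.

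Suppose instead that alternative (b) occurs. Since $height(N') \geq |\constants|$, I can truncate to the first $|\constants| + 1$ of these normal forms, namely $v_0,v_1,\ldots,v_{|\constants|}$, all of which have height at most $|\constants|$. Each $v_i$ is $R$-equivalent to some constant $c_i \in \constants$. Because there are $|\constants|+1$ normal forms but only $|\constants|$ constants, the Pigeonhole Principle supplies indices $i \neq j$ with $c_i = c_j$. Consequently $v_i \leftrightover{\ast}_R c_i = c_j \leftrightover{\ast}_R v_j$, so $v_i \leftrightover{\ast}_R v_j$. Since $v_i$ and $v_j$ are distinct normal forms, the pair $\langle v_i,v_j\rangle$ is a witness to non-$UN^=$, and by construction both $height(v_i)$ and $height(v_j)$ are bounded by $|\constants|$, giving conclusion (ii) of the corollary.

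There is really no serious obstacle here: the heavy lifting has been done inside Proposition \ref{prop:small witness}, whose inductive construction already arranges the heights of the produced normal forms to form the initial segment $0,1,\ldots,height(N')$. The only thing to check is that one can legitimately truncate to the first $|\constants|+1$ of them, which is immediate from the hypothesis $height(N') \geq |\constants|$, after which pigeonhole finishes the argument.
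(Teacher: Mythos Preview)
Your proof is correct and follows essentially the same approach as the paper: invoke Proposition~\ref{prop:small witness}, dispose of the flat-witness case as conclusion~(i), and in the remaining case apply the Pigeonhole Principle to the height-graded normal forms. Your version is in fact marginally cleaner than the paper's, since you truncate to $v_0,\ldots,v_{|\constants|}$ \emph{before} invoking pigeonhole, so the resulting collision is automatically among terms of height at most $|\constants|$; the paper first applies pigeonhole to the full list and then separately observes that the smallest $1+|\constants|$ of them suffice.
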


\proof By Proposition \ref{prop:small witness}, we know that we can find either (a) a witness, $\langle M_0,M_1 \rangle$, to non-$UN^=$, such that $M_0$ and $M_1$ are flat, or (b) $1+height(N')$ distinct normal forms equivalent to constants. If (a) is the case, then we are done. So assume that (b) is true. Since there are $1+height(N') > |\constants|$ normal forms equivalent to, at most, $|\constants|$ constants, we know, by the Pigeonhole Principle, that a single constant is equivalent to two distinct normal forms.
% NR begin
From the above observation, we know that the normal forms have heights
$0$, $1$, $2$, $\ldots$, $height(N')$. The smallest (height-wise)
$1+|\constants|$ normal forms each have height no more than
$|\constants|$. So, we know that we can find a witness, $\langle
N_0,N_1 \rangle$, to non-$UN^=$, such that $height(N_0),height(N_1)
\leq |\constants|$. \qed 
% NR end

\begin{proposition} \label{prop:main} Let $R$ be a flat TRS. Then, either (i)
% NR begin
  there does not exist a constant $c \in \constants$ and normal form $N \in SubMinWit_R$
% NR end
  such that $c ~\leftrightover{*}_{\widehat{E_R}}~ N$ and $height(N)
  \geq 
  |\constants|$, or (ii) there exists a witness, $\langle N_0,N_1
  \rangle$ to non-$UN^=$ for $R$ such that $height(N_0), height(N_1)
  \leq k = max\{1,|\constants|\}$. Further, there is an effective
  procedure to decide whether (i) or (ii) is the case. \end{proposition}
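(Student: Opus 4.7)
The plan is to derive the proposition almost directly from Corollary \ref{cor:small witness}, combined with a bounded search for small witnesses that invokes the decidability of the word problem for shallow systems \cite{CHJ}.

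First I would establish the implication ``not (i) implies (ii)''. Suppose there exist $c \in \constants$ and $N \in SubMinWit_R$ with $c \leftrightover{*}_{\widehat{E_R}} N$ and $height(N) \geq |\constants|$. Since $\widehat{E_R}$-equivalence and $R$-equivalence agree on $\rwtermss$, we also have $c \leftrightover{*}_R N$. By definition of $SubMinWit_R$, the term $N$ is a subterm of some $M$ appearing in a minimal witness $\langle M, M' \rangle$, so Corollary \ref{cor:small witness} applies (taking $N$ as the $N'$ of that corollary). It yields either a flat witness (heights $\leq 1 \leq k$) or a witness $\langle N_0,N_1 \rangle$ with $height(N_0), height(N_1) \leq |\constants| \leq k$; in either case (ii) holds.

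Next, for the effective procedure: any term of height at most $k$ over our fixed signature has only boundedly many positions, hence only boundedly many distinct variable occurrences. I would therefore fix a finite pool of $V$ variables, where $V$ is an upper bound on the number of positions in a height-$k$ term, and enumerate (up to renaming) the finite set of all pairs $(N_0,N_1)$ of terms of height $\leq k$ built from the signature and this pool. For each candidate pair I would (a) check that $N_0 \neq N_1$ and that both are normal forms, which is decidable by matching every rule at every subterm, and (b) check whether $N_0 \leftrightover{*}_R N_1$ using the decidability of the word problem for shallow TRSs \cite{CHJ}. If any pair passes both tests, output (ii); otherwise output (i).

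The only step requiring care is the correctness of the ``otherwise'' branch, which follows from the first paragraph by contrapositive: if (i) failed, the argument above would yield a witness of height at most $k$, which the enumeration is guaranteed to find. I do not expect any serious obstacle here; the substantive work is already packaged inside Corollary \ref{cor:small witness} and inside the cited word-problem algorithm, and what remains is largely bookkeeping on the enumeration bound and on the fact that $\widehat{E_R}$-equivalence coincides with $R$-equivalence on ground terms in $\rwtermss$.
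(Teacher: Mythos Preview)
Your proposal is correct and follows essentially the same route as the paper: both derive the logical implication $\neg(i)\Rightarrow(ii)$ directly from Corollary~\ref{cor:small witness}, and both decide between (i) and (ii) by a bounded search over height-$\leq k$ normal forms using the decidability of the word problem from \cite{CHJ}. The only cosmetic difference is how the search space is made finite: the paper enumerates \emph{ground} normal forms after padding the signature with a fixed supply of fresh constants (its footnote), whereas you enumerate terms over a bounded pool of variables; both devices achieve the same effect.
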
 

\proof Consider all ground\footnote{As in~\cite{zv06,julian}, for
  nonlinear rewrite systems also we can
  expand the signature of the rewrite system with $3\alpha$ new
  constants, where $\alpha$ is the maximum arity of a function symbol
  in the rules, and focus on ground normal forms.} normal forms over the
signature of the rewrite system, i.e., consisting of constants and
function symbols appearing in the finitely many rules of $R$, with
height less than, or equal to, $k$; we use $NF_{\leq k}$ to
denote this set. Notice that if there is a constant, $c \in
\constants$, and an element of $SubMinWit_R$, $N$, with $height(N)
\geq 
|\constants|$, such that $c \leftrightover{*} N$, then by Corollary
\ref{cor:small witness} there is a witness, $\langle N_0,N_1 \rangle$,
to non-$UN^=$ for $R$ with $height(N_0),height(N_1) \leq k$. By a
result in \cite{CHJ}, the word problem is decidable for flat
systems. So, we can construct the set of all pairs, $(s,t)$, such that
$s,t \in NF_{\leq k}$ and $s \leftrightover{*}_R t$. If we do not find
a witness to non-$UN^=$ in $NF_{\leq k}$, then we know that there is
no $c \in \constants$ and $N \in SubMinWit_R$ such that $height(N)
\geq |\constants|$ and $c \leftrightover{*}_R N$. Otherwise, we have
found the witness $\langle N_0,N_1 \rangle$ with $height(N_0),
height(N_1) \leq k$. \qed 

%
% flattening proofs
%

\subsection{Shrinking Witnesses}
Say $\langle f(a,g(b,f(c,x))),h(y,y,h(a,b,c)) \rangle$ is a witness to
non-$UN^=$ for some TRS. Can we replace big subterms of a component of
the witness, without changing the fact that it is a witness, i.e., if
% NR begin
we replace $g(b,f(c,x))$ with a variable, $z$, will $\langle
f(a,z),$ $h(y,y,h(a,b,c)) \rangle$ still be a witness? We show that we
% NR end
can replace depth-$1$ subterms that are \emph{not} equivalent to a
constant with a variable. This shrinks the size of the witness; in
particular, only depth-$1$ subterms of such a shrunk witness that are
equivalent to a constant can have height greater than, or equal to,
the number of constants in the TRS. So, a shrunk minimal witness 
either has small components, or there is a large subterm of a
component of a minimal witness that is equivalent to a constant. If
the latter is the case, then we know, by Corollary \ref{cor:small
  witness}, that there is a small witness. 

\begin{defi} \label{defn:variable}

Let $R$ be a rewrite system. Say $X$ contains, for each term (up to
renaming of variables), $t$, a variable $x_{\overline{t}}$, where
$x_{\overline{s}} = x_{\overline{t}}$ if, and only if, $s
~\leftrightover{*}_R~ t$. Let $t = f(t_1,\ldots,t_n)$ be a term in
$\rwtermss$. Then, we define $\phi(t)$ as: \[ \phi(t) = \left\{ \begin{array}{ccc}
  x_{\overline{t}} &\textrm{ if }t\textrm{ is not equivalent to a
    constant} \\ t & \textrm{otherwise} \end{array} \right. \] 

Let $u = f(u_1,\ldots,u_m)$ for $m > 0$ and $v \in X$. We define the function $\alpha$ that maps terms to terms as follows:  $\alpha(u) = f(\phi(u_1),\ldots,\phi(u_m))$ and $\alpha(v) = v$.

\end{defi}

Notice that $\alpha(c) = c$ for $c \in \constants$, since $\alpha$ only affects depth-$1$ subterms.

\begin{lemma} \label{lem:non-root step} Let $R$ be a flat TRS, and let
  $u \leftrightarrow_R v$ be a proof over $R$, where $u
  \leftrightarrow_R v$ is not a root rewrite. Then, there is a proof 
  $\alpha(u) ~\leftrightover{*}_R~ \alpha(v)$. \end{lemma}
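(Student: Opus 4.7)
The plan is to case-split on whether the depth-$1$ subterm affected by the rewrite is equivalent to a constant. Since the single step $u \leftrightarrow_R v$ is not at the root, write $u = f(u_1,\ldots,u_m)$ and $v = f(v_1,\ldots,v_m)$ with $m > 0$, and locate the step at some position $p = i.q$ (the suffix $q$ possibly empty, meaning the step is a root rewrite of $u_i$). Then $u_j = v_j$ for every $j \neq i$, and $u_i \leftrightarrow_R v_i$ in a single step. In particular $u_i$ and $v_i$ lie in the same $\leftrightover{*}_R$-equivalence class, a fact which drives the whole argument since $\phi$ is defined in terms of these equivalence classes.

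First case: $u_i$ is not equivalent to any constant. Then $v_i$ is not equivalent to any constant either, so by Definition \ref{defn:variable} we have $\phi(u_i) = x_{\overline{u_i}} = x_{\overline{v_i}} = \phi(v_i)$. Combined with $\phi(u_j) = \phi(v_j)$ for every $j \neq i$, this gives $\alpha(u) = \alpha(v)$ outright, so the empty proof already witnesses $\alpha(u) \leftrightover{*}_R \alpha(v)$.

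Second case: $u_i$ is equivalent to a constant. Then so is $v_i$, and $\phi$ leaves both of them intact: $\phi(u_i) = u_i$ and $\phi(v_i) = v_i$. Since still $\phi(u_j) = \phi(v_j)$ for $j \neq i$, the terms $\alpha(u)$ and $\alpha(v)$ agree everywhere except at position $i$, where they contain $u_i$ and $v_i$, respectively. The original rule instance realizing the step $u_i \leftrightarrow_R v_i$ can therefore be reapplied verbatim at position $i.q$ inside $\alpha(u)$ — this is well-defined because $\alpha$ modifies only depth-$1$ subterms of $u$ and in this case does not modify $u_i$ — producing $\alpha(v)$ in a single step.

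There is really no hard obstacle here; the only point that needs checking is that the rewritten subterm $u_i$ survives intact under $\alpha$ whenever we still need to perform the step, and this is precisely what $\phi$ was designed to guarantee. The lemma then becomes just the observation that $\phi$ respects the $R$-equivalence class at position $i$, so the step either collapses to equality (Case 1) or transports trivially (Case 2).
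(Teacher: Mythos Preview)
Your proof is correct and follows exactly the same approach as the paper: write $u=f(u_1,\ldots,u_m)$ and $v=f(v_1,\ldots,v_m)$, observe that a non-root step changes only one $u_i$ to $v_i$ with $u_i\leftrightarrow_R v_i$, and then case-split on whether $u_i$ (hence $v_i$) is equivalent to a constant, obtaining either $\alpha(u)=\alpha(v)$ or $\alpha(u)\leftrightarrow_R\alpha(v)$. The only difference is the order in which you present the two cases and a bit more detail on why the rule reapplies at position $i.q$.
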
  

\proof Say $u = f(u_1,\ldots,u_m)$ and $v = f(v_1,\ldots,v_m)$ (notice
that if $u \leftrightarrow_R v$ is not a root rewrite, then neither
$u$ nor $v$ can have height zero). Since the rewrite is not a root
rewrite, we know that there are $u_i$ and $v_i$ such that $u_i
\leftrightarrow_R v_i$, and $u_j = v_j$ for all $j \neq i$. If
$u_i,v_i$ are equivalent to a constant, then $\phi(u_i) = u_i$ and
$\phi(v_i) = v_i$, and hence $\alpha(u) \leftrightarrow_R
\alpha(v)$. If $u_i,v_i$ are not equivalent to a constant, then
$\phi(u_i) = x_{\overline{u_i}} = x_{\overline{v_i}} = \phi(v_i)$, and
hence $\alpha(u) = \alpha(v)$. \qed

\begin{lemma} \label{lem:root step} Let $R$ be a flat TRS, and let $u
  \leftrightarrow_R v$ be a proof over $R$, where $u \leftrightarrow_R
  v$ is a root rewrite. If the rewrite has the form $u = w\sigma
% NR begin
  \rightarrow x\sigma = v$ (i.e. it uses a collapsing rule $w \rightarrow x$), then
% NR end
  $\alpha(u) \leftrightarrow_R \phi(v)$; otherwise $\alpha(u)
  \leftrightarrow_R \alpha(v)$. \end{lemma}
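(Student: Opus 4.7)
The plan is to split on whether the root rewrite uses a collapsing rule or not, and in each case explicitly construct a substitution $\tau$ (obtained from the original matching substitution $\sigma$ by applying $\phi$ componentwise) under which the same rewrite rule fires on $\alpha(u)$ and produces $\alpha(v)$ (or $\phi(v)$ in the collapsing case).

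First I would handle the non-collapsing case. Here the rule $l \to r$ has $l = f(a_1,\ldots,a_n)$ flat with $a_i \in \constants \cup X$, and $r$ is either a constant or a flat term $g(b_1,\ldots,b_m)$; WLOG $u = l\sigma$ and $v = r\sigma$. Define $\tau$ on the variables of $l$ and $r$ by $\tau(y) = \phi(y\sigma)$. I would verify $l\tau = \alpha(u)$ position by position: at positions where $a_j$ is a constant, $\alpha(u)|_j = \phi(a_j) = a_j = (l\tau)|_j$ because constants are equivalent to themselves and $\phi$ fixes them; at positions where $a_j$ is a variable, $\alpha(u)|_j = \phi(a_j\sigma) = \tau(a_j) = (l\tau)|_j$. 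The same calculation gives $r\tau = \alpha(v)$, whether $r$ is a constant or of height one. Applying $l \to r$ at the root then yields $\alpha(u) \leftrightarrow_R \alpha(v)$.

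Next I would handle the collapsing case, where $w = f(a_1,\ldots,a_n)$ (with $n \geq 1$) and $x$ is a variable appearing in $w$, say $x = a_i$. Write $u_j := a_j\sigma$, so $u = f(u_1,\ldots,u_n)$ and $v = u_i$. Define $\tau(a_j) = \phi(u_j)$ for each variable $a_j$ of $w$. Exactly as above, $w\tau = \alpha(u)$, and $x\tau = \tau(a_i) = \phi(u_i) = \phi(v)$. Applying the rule $w \to x$ at the root of $\alpha(u)$ then gives $\alpha(u) \leftrightarrow_R \phi(v)$, as desired.

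The one real obstacle is verifying that $\tau$ is well-defined when the same variable occurs in multiple positions of $l$ (or $w$), since $R$ is not assumed linear. But this is straightforward: if $a_j = a_k$, then $u_j = a_j\sigma = a_k\sigma = u_k$, so $\phi(u_j) = \phi(u_k)$, and the putative values of $\tau$ at the shared variable agree. The fact that $\phi$ depends only on the equivalence class of its argument — and in particular is a well-defined function of the term — is precisely what makes the construction go through without a linearity hypothesis.
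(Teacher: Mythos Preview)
Your proof is correct and follows essentially the same idea as the paper's: both arguments hinge on the observation that applying $\phi$ to all depth-$1$ positions of $u$ and $v$ treats shared-variable positions consistently (since $a_j = a_k$ forces $u_j = u_k$ and hence $\phi(u_j) = \phi(u_k)$) while fixing constant positions. You make this explicit by constructing the substitution $\tau(y) = \phi(y\sigma)$ and checking $l\tau = \alpha(u)$, $r\tau = \alpha(v)$ (respectively $x\tau = \phi(v)$), whereas the paper argues the same point more tersely by describing directly how $\alpha$ acts on the depth-$1$ subterms. One minor remark: your closing sentence slightly overstates things---$\phi$ does \emph{not} depend only on the equivalence class when its argument is equivalent to a constant---but your actual well-definedness argument only uses that equal terms have equal $\phi$-images, which is immediate.
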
 

\proof In case of a collapsing rule, any instantiations of $x$ appearing as depth-$1$ subterms of
$u$ are equal to $v$, and so they are replaced by $\phi(v)$ in
$\alpha(u)$. Since constants in $w$ are never replaced,  $\alpha(u)
\leftrightarrow_R \phi(v)$. Otherwise, if $s$ is a depth-$1$ subterm
of $u$ or $v$ that is an instantiation 
of a shared variable, then every depth-$1$ instance of $s$ is replaced
by $\phi(s)$ in $\alpha(u)$ and $\alpha(v)$. So, $\alpha(u)
\leftrightarrow_R \alpha(v)$. \qed 

\begin{proposition} \label{prop:flatten} Let $R$ be a flat TRS. Let $s$ and
  $t$ be terms not equivalent to a constant and  $\pi \in s
  \leftrightover{*} t$ be a proof over $R$. Then, either there is a
  proof $\alpha(s) ~\leftrightover{*}_{\widehat{E_R}}~ y$ for some
  variable $y$, or there is a proof $\alpha(s)
  ~\leftrightover{*}_{\widehat{E_R}}~ \alpha(t)$. \end{proposition}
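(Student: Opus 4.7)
The plan is to exploit the ``at most one root rewrite'' proof normalization from \cite{CHJ}: I would first convert $\pi$ into a proof $\pi_{1rr}\in s\leftrightover{*}_{\widehat{E_R}}t$ containing at most one root rewrite step, and then push $\alpha$ through the proof step-by-step using Lemmas \ref{lem:non-root step} and \ref{lem:root step}. Since $\widehat{E_R}$ is flat and $R$-equivalence coincides with $\widehat{E_R}$-equivalence, both lemmas apply verbatim when the underlying flat system is taken to be $\widehat{E_R}$; in particular, $\phi$ and $\alpha$ are the same whether computed via $R$ or via $\widehat{E_R}$.

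Writing $\pi_{1rr}$ as $s=u_0\leftrightarrow u_1\leftrightarrow\cdots\leftrightarrow u_n=t$, I would distinguish two cases. If $\pi_{1rr}$ has no root rewrite step, iterated application of Lemma \ref{lem:non-root step} telescopes into $\alpha(s)\leftrightover{*}_{\widehat{E_R}}\alpha(t)$, giving the second disjunct. Otherwise let $k$ be the index of the unique root step; applying Lemma \ref{lem:non-root step} to the non-root steps of the prefix and the suffix separately gives $\alpha(s)\leftrightover{*}_{\widehat{E_R}}\alpha(u_k)$ and $\alpha(u_{k+1})\leftrightover{*}_{\widehat{E_R}}\alpha(t)$.

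Next, I would apply Lemma \ref{lem:root step} to the root step $u_k\leftrightarrow u_{k+1}$. In the non-collapsing subcase the lemma yields $\alpha(u_k)\leftrightarrow_{\widehat{E_R}}\alpha(u_{k+1})$, which splices with the prefix and suffix to produce $\alpha(s)\leftrightover{*}_{\widehat{E_R}}\alpha(t)$. In the collapsing subcase with $u_k$ as the source of the rule, the lemma yields $\alpha(u_k)\leftrightarrow_{\widehat{E_R}}\phi(u_{k+1})$; since $u_{k+1}\leftrightover{*}t$ and $t$ is not equivalent to a constant by hypothesis, $u_{k+1}$ cannot be equivalent to a constant either, so $\phi(u_{k+1})=x_{\overline{u_{k+1}}}$ is a variable $y$, and chaining with the prefix yields the first disjunct $\alpha(s)\leftrightover{*}_{\widehat{E_R}}y$.

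The main obstacle is the remaining subcase in which the root rewrite is collapsing but $u_{k+1}$ (rather than $u_k$) is the source of the rule: Lemma \ref{lem:root step} then produces $\alpha(u_{k+1})\leftrightarrow_{\widehat{E_R}}\phi(u_k)$, connecting $\alpha(u_{k+1})$ (and hence $\alpha(t)$ through the suffix) to the variable $\phi(u_k)=x_{\overline{u_k}}$ rather than directly giving a chain originating at $\alpha(s)$. Handling this requires an extra observation, for instance switching to an induction on the length of $\pi_{1rr}$ and peeling the root step off at whichever end realizes the forward collapsing direction, or a structural argument bridging $\alpha(u_k)$ to $x_{\overline{u_k}}$ using the form $u_k=x\sigma$; after which the rest of the argument is routine concatenation of the chains produced by Lemmas \ref{lem:non-root step} and \ref{lem:root step}.
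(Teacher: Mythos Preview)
Your approach is exactly the paper's: pass to $\pi_{1rr}$ and push $\alpha$ through step by step via Lemmas \ref{lem:non-root step} and \ref{lem:root step}, splitting on whether the unique root step (if any) is collapsing. For the non-collapsing and forward-collapsing subcases your argument and the paper's coincide verbatim.

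You are right to flag the backward-collapsing subcase as an obstacle; in fact the paper does not treat it either --- it simply writes ``let $w\sigma=s_j\leftrightarrow_{\widehat{E_R}} s_{j+1}=x\sigma$'' and silently assumes the non-variable side of the collapsing rule comes first in the sequence. However, your proposed repairs do not close the gap. The ``bridge $\alpha(u_k)$ to $x_{\overline{u_k}}$'' idea fails outright: take $R=\{f(x)\to x\}$, $s=g(a)$, $t=f(g(a))$ (neither equivalent to a constant, and $s\leftrightover{*} t$ by one backward-collapsing root step). Then $\alpha(s)=g(a)$ and $\alpha(t)=f(x_{\overline{g(a)}})$, and one checks that $g(a)$ is $\widehat{E_R}$-equivalent neither to any variable nor to $f(x_{\overline{g(a)}})$; so the proposition as literally stated is actually false for this pair, and no amount of induction or re-peeling will produce $\alpha(s)\leftrightover{*}_{\widehat{E_R}} y$. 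What the backward subcase genuinely delivers, by the symmetric application of Lemma \ref{lem:root step}, is $\alpha(t)\leftrightover{*}_{\widehat{E_R}} y$. That third disjunct is all the sole consumer (Proposition \ref{prop:flatten still NF}) needs, since one may swap $M$ and $N$ there. The honest fix is therefore to state the conclusion symmetrically (add ``or $\alpha(t)\leftrightover{*}_{\widehat{E_R}} y$'') rather than to look for a bridging argument that cannot exist.
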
 

\proof We know that there is a proof, $\pi_{1rr}$, over
$\widehat{E_R}$ with at most one root rewrite. If $\pi_{1rr}$ has zero
steps, then $\alpha(s) = \alpha(t)$, and so $\alpha(s)
~\leftrightover{*}_{\widehat{E_R}}~ \alpha(t)$.  
Assume that $\pi_{1rr}$ has at least one step, and say that it has the
form $s = s_0 \leftrightarrow_{\widehat{E_R}} \ldots
\leftrightarrow_{\widehat{E_R}} s_k = t$ for some $k \geq 1$. We
consider three cases: (i) $\pi_{1rr}$ has no root rewrite; (ii) the
only root rewrite in $\pi_{1rr}$ uses a collapsing rule; and (iii) the
only root rewrite in $\pi_{1rr}$ does not use a collapsing rule. 

In cases (i) and (iii), we know, by lemmas \ref{lem:non-root step} and
% NR begin
\ref{lem:root step}, that there is a proof  $\alpha(s_i)$
$~\leftrightover{*}_{\widehat{E_R}}~$ $\alpha(s_{i+1})$ for $0 \leq i
% NR end
\leq k-1$. Therefore, there is a proof $\alpha(s)
~\leftrightover{*}_{\widehat{E_R}}~ \alpha(t)$. 

In case (ii), let $w\sigma = s_j \leftrightarrow_{\widehat{E_R}}
s_{j+1} = x\sigma$ be the instance of the collapsing rule, $w
\rightarrow x$, for some $0 \leq j \leq k-1$. For $i < j$, we know
that there is a proof $\alpha(s_i)
~\leftrightover{*}_{\widehat{E_R}}~ \alpha(s_{i+1})$. By Lemma
\ref{lem:root step}, we know that $\alpha(s_j)
\leftrightarrow_{\widehat{E_R}} \phi(s_{j+1})$, and so there is a
proof $\alpha(s) ~\leftrightover{*}_{\widehat{E_R}}~
\phi(s_{j+1})$. Since the terms in $\pi_{1rr}$ cannot be equivalent to
a constant (since $s,t$ are not equivalent to a constant), we know
that $\phi(s_{j+1}) = x_{\overline{s_{j+1}}}$, and so the proof is
complete \qed 

\begin{remark} \label{rem:phi(v)_fresh} As mentioned above, for any term
  $v$ not equivalent to a constant, $\phi(v)$ can be chosen so that it
  does not appear as a subterm of any finite number of
  terms. Therefore, $\phi(s_{j+1})$ can be chosen so that it does not
  appear as a subterm of $s_0,s_1,\ldots,s_k$. We can always choose a
  fresh variable that does not appear in a finite set of terms. \end{remark} 

\begin{proposition} \label{prop:flatten still NF} Let $R$ be a flat TRS, and
  let $\langle M,N \rangle$ be a minimal witness to non-$UN^=$ for
  $R$, with $M,N$ not equivalent to a constant. Then either $\langle
  \alpha(M),y \rangle$ or $\langle \alpha(M),\alpha(N) \rangle$ is a
  witness for some variable, $y$. \end{proposition}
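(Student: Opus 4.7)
The plan is to apply Proposition~\ref{prop:flatten} to a proof $\pi \in M \leftrightover{*}_R N$, obtaining either a proof $\alpha(M) \leftrightover{*}_{\widehat{E_R}} y$ for some variable $y$, or a proof $\alpha(M) \leftrightover{*}_{\widehat{E_R}} \alpha(N)$. Since $R$-equivalence and $\widehat{E_R}$-equivalence agree on $\rwtermss$, either case already supplies the equivalence needed for a witness. Three things then remain to be checked: that $\alpha(M)$ is a normal form, that the chosen right-hand side ($y$ or $\alpha(N)$) is a normal form, and that the two components of the candidate pair are distinct.

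For the normal-form condition I would invoke Proposition~\ref{prop:properties}. Writing $M = f(u_1,\ldots,u_m)$, the term $\alpha(M)$ is obtained from $M$ by replacing each depth-$1$ subterm $u_i$ that is not equivalent to a constant by the fresh variable $x_{\overline{u_i}}$, leaving the other depth-$1$ subterms untouched. To apply Proposition~\ref{prop:properties} I need $x_{\overline{u_i}} \neq x_{\overline{u_j}}$ whenever $u_i \neq u_j$ (for those $u_i,u_j$ that are replaced). By Definition~\ref{defn:variable} this amounts to $u_i \not\leftrightover{*}_R u_j$, which is exactly what Lemma~\ref{lem:violate minimality} provides, since $M \in SubMinWit_R$. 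The same argument shows $\alpha(N) \in NF_R$ in the second case, and $y$ is trivially a normal form.

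For distinctness in the first case, by Remark~\ref{rem:phi(v)_fresh} the variable $y$ can be chosen not to occur in $\alpha(M)$, so $\alpha(M) \neq y$. The second case is the main obstacle. I would argue by contradiction: suppose $\alpha(M) = \alpha(N)$. Then $M$ and $N$ must share the same root symbol $f$ and arity $m$; write $N = f(v_1,\ldots,v_m)$. The equality $\phi(u_i) = \phi(v_i)$ forces, after a short case split on whether each component is equivalent to a constant, either $u_i = v_i$ or else both $u_i$ and $v_i$ fail to be equivalent to a constant and $u_i \leftrightover{*}_R v_i$. Since $M \neq N$, some index $i$ must fall in the second case, producing two distinct normal forms $u_i, v_i$ (each being a subterm of a normal form) with $u_i \leftrightover{*}_R v_i$ and $|u_i|+|v_i| < |M|+|N|$, contradicting the minimality of $\langle M,N \rangle$. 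Hence $\alpha(M) \neq \alpha(N)$, and one of the two candidate pairs is indeed a witness.
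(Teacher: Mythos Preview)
Your proposal is correct and follows essentially the same route as the paper's proof: apply Proposition~\ref{prop:flatten} to get the equivalence, use Lemma~\ref{lem:violate minimality} together with Proposition~\ref{prop:properties} to verify that $\alpha(M)$ and $\alpha(N)$ are normal forms, and then derive distinctness from minimality. Your case analysis for $\alpha(M)\neq\alpha(N)$ is in fact slightly more explicit than the paper's (you separate out the mixed case where one of $u_i,v_i$ is equivalent to a constant and the other is not, which the paper glosses over), but the underlying argument is identical.
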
  

\proof We know from Proposition \ref{prop:flatten} that either there is a
proof $\alpha(M) ~\leftrightover{*}_{\widehat{E_R}}~ y$ for some
variable $y$, or there is a proof 
$\alpha(M) ~\leftrightover{*}_{\widehat{E_R}}~ \alpha(N)$. So, we need
to show that (i) $\alpha(M)$, $\alpha(N)$, 
and $y$ are normal forms, and that (ii) $\alpha(M) \neq y$ (whenever
$\alpha(M) ~\leftrightover{*}_{\widehat{E_R}}~ y$) and $\alpha(M) \neq
\alpha(N)$. 

For (i), we need to show that if $s$ and $t$ are depth-$1$ subterms of
$M$ (or $N$) that are not equivalent to constants, then $\phi(s) \neq
\phi(t)$ whenever $s \neq t$. So, say that $s \neq t$. If $s
\leftrightover{*}_{\widehat{E_R}} t$, then $\langle s,t \rangle$ would
violate the minimality of $\langle M,N \rangle$, since $|s|+|t| < |M|
\leq |M|+|N|$. So, we know that $s$ and $t$ are not equivalent, and
hence $\phi(s) \neq \phi(t)$. We know by Proposition
\ref{prop:properties} that $\alpha(M)$ and $\alpha(N)$ are normal
forms, because the variables replacing subterms of $M$ and $N$ can be
chosen so that they are fresh. Since variables are always normal
forms, we know that $\alpha(M)$, $\alpha(N)$, and $y$ are normal
forms. 

For (ii), if $M$ is not a variable, then $\alpha(M)$ is not a
variable, and hence $\alpha(M) \neq y$. If $M$ is a variable, then, by
Remark \ref{rem:phi(v)_fresh}, we can choose $y$ so that it does not
appear as a subterm of $M$. So, $\alpha(M) = M \neq y$. 

% NR begin
To see that $\alpha(M) \neq \alpha(N)$, we need to consider two cases. If $root(M) \neq root(N)$, then clearly $\alpha(M) \neq \alpha(N)$, since $\alpha$ does not affect the outermost function symbol. If $root(M) = root(N)$, then it must be the case that $M|_i \neq N|_i$ for some integer, $i$. In order for $
% NR end
\alpha(M) = \alpha(N)$ to be true, $M|_i$ and $N|_i$ must be replaced by the same variable. But this only happens when $M|_i$ and $N|_i$ are equivalent, and if $M|_i$ and $N|_i$ were equivalent, then (setting $M' = M|_i$ and $N' = N|_i$) $\langle M',N' \rangle$ would be a witness with $|M'| < |M|$ and $|N'| < |N|$. This would violate the minimality of $\langle M,N \rangle$, so $M|_i$ and $N|_i$ cannot be equivalent, and hence $M|_i$ and $N|_i$ must be replaced by distinct variables. Therefore, $\alpha(M) \neq \alpha(N)$. \qed

%
% main result
%

%\section{Main Results}
\section{Decidability for Flat and Shallow Rewrite Systems}
\begin{lemma} \label{lem:no constant then small} Let $R$ be a flat TRS, and say that there is no constant $c \in \constants$ and normal form $N' \in SubMinWit_R$ such that $c ~\leftrightover{*}_{\widehat{E_R}}~ N'$ and $height(N') \geq |\constants|$. Let $\langle M,N \rangle$ be a minimal witness to non-$UN^=$ for $R$. Then $height(\alpha(M)),height(\alpha(N)) \leq k = max \{1,|\constants|\}$. \end{lemma}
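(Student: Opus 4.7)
The plan is to unfold Definition~\ref{defn:variable} on $M$ (the argument for $N$ is symmetric) and bound the height of each depth-$1$ subterm of $\alpha(M)$ directly from the hypothesis. First I would dispatch the degenerate shapes of $M$: if $M$ is a variable or a constant, then $\alpha(M) = M$ has height $0 \leq k$, and we are done. So I can assume $M = f(s_1,\ldots,s_m)$ with $m \geq 1$, whence $\alpha(M) = f(\phi(s_1),\ldots,\phi(s_m))$ and $height(\alpha(M)) = 1 + \max_i height(\phi(s_i))$.

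Next I would bound $height(\phi(s_i))$ for each $i$. Since $M$ is a normal form, each $s_i$ is a normal form as well, and each $s_i$ is a subterm of the witness component $M$, so $s_i \in SubMinWit_R$. By Definition~\ref{defn:variable} there are two cases: either (a) $s_i$ is not $R$-equivalent to any constant, in which case $\phi(s_i) = x_{\overline{s_i}}$ is a variable of height $0$; or (b) $s_i \leftrightover{*}_R c$ for some $c \in \constants$, in which case $\phi(s_i) = s_i$. In case (b) I would invoke the hypothesis of the lemma, using the fact (recalled after the definition of $\widehat{E_R}$) that $R$-equivalence coincides with $\widehat{E_R}$-equivalence, to conclude $c \leftrightover{*}_{\widehat{E_R}} s_i$ with $s_i \in SubMinWit_R$, whence $height(s_i) < |\constants|$ and thus $height(\phi(s_i)) \leq |\constants|-1$.

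Combining the two cases, $height(\phi(s_i)) \leq \max\{0,|\constants|-1\}$ for every $i$, and therefore $height(\alpha(M)) \leq 1 + \max\{0,|\constants|-1\} = \max\{1,|\constants|\} = k$. The identical reasoning applied to $N$ yields $height(\alpha(N)) \leq k$, proving the lemma. There is no real obstacle here beyond the case split: the only subtlety is the corner case $|\constants|=0$, which is absorbed by the $\max\{1,\cdot\}$ in the definition of $k$, and the degenerate shapes where $M$ (or $N$) is itself a variable or a constant, handled at the start.
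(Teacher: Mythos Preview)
Your argument is correct and follows essentially the same route as the paper: split the depth-$1$ subterms of $\alpha(M)$ according to whether they are equivalent to a constant, use the hypothesis to bound those that are, and observe that the others become variables under $\phi$. Your write-up is in fact a bit more careful than the paper's, making the degenerate cases and the arithmetic $1+\max\{0,|\constants|-1\}=\max\{1,|\constants|\}$ explicit.
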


\proof We know that (i) all depth-$1$ subterms of $\alpha(M)$ and $\alpha(N)$ that are not equivalent to a constant are necessarily variables, and (ii) there is no constant $c \in \constants$ and normal form $N' \in SubMinWit_R$ such that $c ~\leftrightover{*}_{\widehat{E_R}}~ N'$ and $height(N') \geq |\constants|$. Hence, the depth-$1$ subterms of $\alpha(M)$ and $\alpha(N)$ are either (i) variables or (ii) elements of $SubMinWit_R$ with height strictly less than $|\constants|$. This means that the heights of $\alpha(M)$ and $\alpha(N)$ are at most $max \{1,|\constants|\}$. \qed

\begin{theorem} \label{thm:flat decidable} Let $R$ be a flat TRS. If there
  is a witness to non-$UN^=$ for $R$, then there exists a witness,
  $\langle N_0,N_1 \rangle$, with $height(N_0),height(N_1) \leq k =
  max\{1,|\constants|\}$. Hence $UN^=$ is decidable for $R$. \end{theorem}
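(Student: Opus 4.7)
The plan is to combine Proposition~\ref{prop:main} with the shrinking machinery developed earlier (Proposition~\ref{prop:flatten still NF} and Lemma~\ref{lem:no constant then small}), and to invoke decidability of the word problem for flat systems for the ``hence'' clause.

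First I would apply Proposition~\ref{prop:main}. If its case~(ii) holds, we already have a witness whose components have height at most $k$. So assume case~(i) holds: no constant $c \in \constants$ is $\widehat{E_R}$-equivalent to any $N' \in SubMinWit_R$ with $height(N') \geq |\constants|$. Fix a minimal witness $\langle M,N \rangle$ and split on whether $M$ or $N$ is $R$-equivalent to some constant $c$. If one of them is, then $M \leftrightover{*}_R N$ forces the other to be equivalent to $c$ as well; both $M$ and $N$ lie in $SubMinWit_R$ (as subterms of themselves; for $N$, use that $\langle N,M \rangle$ is also a minimal witness), so the hypothesis of case~(i) directly gives $height(M), height(N) < |\constants| \leq k$, and $\langle M,N \rangle$ itself serves as the desired witness. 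Otherwise neither $M$ nor $N$ is equivalent to a constant, so Proposition~\ref{prop:flatten still NF} yields a witness $\langle \alpha(M), y \rangle$ or $\langle \alpha(M), \alpha(N) \rangle$, and Lemma~\ref{lem:no constant then small} bounds $height(\alpha(M))$ and $height(\alpha(N))$ by $k$; since variables have height $0$, both components of the new witness satisfy the height bound.

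For the decidability claim, once the height bound is in hand the search space is effectively finite. Fix enough variables to label every position of any term of height at most $k$ over the signature of $R$ (a finite bound depending only on the maximal arity and $k$), and enumerate the finitely many normal forms of height at most $k$ over this alphabet; checking whether a term is a normal form is routine from the finitely many rules. For each pair of distinct such normal forms decide $R$-equivalence using the word-problem procedure for flat systems of Comon~et~al.~\cite{CHJ}. Then $R$ fails $UN^=$ if and only if some such pair is $R$-equivalent.

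The substantive obstacle has already been absorbed into Proposition~\ref{prop:flatten still NF} and Lemma~\ref{lem:no constant then small}, which shrink depth-one subterms that are not equivalent to constants to fresh variables, and into Proposition~\ref{prop:main}, which deals with the ``big subterm equivalent to a constant'' case. What remains for the theorem itself is the small case split sketched above, together with the routine assembly of the decision procedure from the height bound and the known decidability of the word problem.
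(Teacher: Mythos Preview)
Your proposal is correct and follows essentially the same route as the paper: apply Proposition~\ref{prop:main}, dispose of case~(ii) immediately, and in case~(i) either use the height bound on $SubMinWit_R$ directly when $M,N$ are equivalent to a constant, or invoke Proposition~\ref{prop:flatten still NF} together with Lemma~\ref{lem:no constant then small} otherwise, finishing with the word-problem decidability of~\cite{CHJ}. Your handling of the constant-equivalence subcase is in fact spelled out more explicitly than the paper's, which simply asserts ``without loss of generality'' that $M,N$ are not equivalent to a constant.
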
 

\proof By Proposition \ref{prop:main}, we know that there is either
(i) \emph{no} constant $c \in \constants$ and normal form $N' \in
SubMinWit_R$ such that $c ~\leftrightover{*}_{\widehat{E_R}}~ N'$ and
$height(N') \geq |\constants|$, or (ii) a witness, $\langle N_0,N_1
\rangle$ to non-$UN^=$ for $R$ such that $height(N_0),height(N_1) \leq
k$. Further, there is an effective procedure to decide if (i) or (ii)
is the case. 

If (ii) is the case, then we have our witness. So, assume that (i) is
the case, and let $\langle M,N \rangle$ be a minimal witness to
% NR begin
non-$UN^=$ for $R$. If $M$ and $N$ are equivalent to a constant, $c$, and 
$height(M), height(N) < |\constants|$, then we are done. So, we assume (without
loss of generality) that $M,N$ are not equivalent to a constant, and thus we
can apply Proposition \ref{prop:flatten}. Hence there is either a
% NR end
proof $\alpha(M) ~\leftrightover{*}_{\widehat{E_R}}~ y$ for some
variable $y$, or a proof $\alpha(M)
~\leftrightover{*}_{\widehat{E_R}}~ \alpha(N)$. By Lemma \ref{lem:no
  constant then small}, we know that
$height(\alpha(M)),height(\alpha(N)) \leq k$. Hence, by Proposition
\ref{prop:flatten still NF}, either $\langle \alpha(M),y \rangle$ or
$\langle \alpha(M),\alpha(N) \rangle$ is a witness to non-$UN^=$ with
$height(\alpha(M)),height(\alpha(N)),|y| \leq k$.  

So, if there is a witness to non-$UN^=$ for $R$, then there is a
witness, $\langle N_0,N_1 \rangle$, with $height(N_0),height(N_1) \leq
k$. The following algorithm, on input $R$, determines if $R$ is
$UN^=$: Enumerate all ground normal forms over the
signature of the rewrite system, i.e., consisting of constants and
function symbols appearing in the finitely many rules of $R$, with height less than, or equal to,
$k$; say they are $N_0,\ldots,N_n$. In \cite{CHJ}, the authors show
that the word problem is decidable for shallow TRS. So, for $0 \leq i
< j \leq n$, check if $N_i ~\leftrightover{*}_{\widehat{E_R}}~
N_j$. If $N_i ~\leftrightover{*}_{\widehat{E_R}}~ N_j$ for some $0
\leq i < j \leq n$, then $R$ is not $UN^=$; otherwise, $R$ is
$UN^=$. \qed 

%
% flattening shallow rewrite systems
%

Now that we have shown that $UN^=$ is decidable for flat
rewrite systems, we extend this result to 
shallow rewrite systems. We do this by \emph{flattening} a shallow rewrite
system, i.e., transforming a shallow rewrite system into a flat one in
a way that preserves $UN^=$. 
\begin{theorem} 
Let $R$ be a shallow TRS. Then $UN^=$ is decidable for $R$. 
\end{theorem}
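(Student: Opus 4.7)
The plan is to reduce the shallow case to the flat case already settled by Theorem~\ref{thm:flat decidable}. Given a shallow TRS $R$, I would effectively construct a flat TRS $R'$ such that $R$ has $UN^=$ if and only if $R'$ does, and then invoke Theorem~\ref{thm:flat decidable} on $R'$.

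For the flattening transformation, which is essentially the one used (for the linear subcase) in \cite{zv06,julian}, I would introduce, for each ground subterm $s$ of positive height appearing in $R$, a fresh constant $c_s$ (sharing constants between equal subterms). Each rule $l \to r$ of $R$ is then rewritten as $\hat{l} \to \hat{r}$, where $\hat{\cdot}$ replaces each maximal ground subterm of positive height by the associated $c_s$; shallowness of $R$ guarantees that depth-two subterms of any rule side are ground, so both $\hat{l}$ and $\hat{r}$ are flat after this replacement. Finally, for each such ground $s = g(s_1,\ldots,s_k)$ I would add the flat ``unfolding'' rules $c_s \to g(\hat{s_1},\ldots,\hat{s_k})$ and $g(\hat{s_1},\ldots,\hat{s_k}) \to c_s$, where $\hat{s_i}$ is either $s_i$ itself (when it is already a constant) or the fresh constant naming it. The resulting $R'$ is flat.

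The core of the argument is verifying $UN^=$-preservation: any witness to non-$UN^=$ in $R$ can be flattened via $\hat{\cdot}$ to a witness in $R'$, and any witness in $R'$ can be unfolded back using the defining rules to a witness in $R$. The equational direction is the easy one, since each $c_s$ is $R'$-equivalent to $s$ by construction. The delicate part is normal-form preservation: I would show that normal forms of $R$ correspond, via $\hat{\cdot}$ and its unfolding inverse, to normal forms of $R'$ in which no $c_s$ occurs in an unfoldable position, so that the correspondence between witnesses is exact.

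The hardest step will be normal-form preservation in the \emph{non-linear} setting, which is precisely where this paper strengthens \cite{zv06,julian}. Non-linear left-hand sides make rule applicability depend on coincidence of subterms, so abstracting a ground subterm by a fresh constant could \emph{a priori} either enable or disable some rule. Ruling this out would use the pattern-matching results of Section~2 (in particular Lemma~\ref{lem:apply} and Proposition~\ref{prop:properties}), which characterize exactly when a flat rule matches and show that consistent abstractions of equal subterms by fresh constants preserve non-applicability of rules. Once this correspondence is established, the theorem follows by applying Theorem~\ref{thm:flat decidable} to the effectively-computable $R'$.
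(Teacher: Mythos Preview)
Your approach is essentially the paper's: reduce the shallow case to the flat case via a $UN^=$-preserving flattening transformation, then invoke Theorem~\ref{thm:flat decidable}. The paper gives no proof beyond the sentence preceding the theorem and the remark in the introduction that ``in \cite{zv06,julian} it was shown that \UNeq{} for shallow systems can be reduced to \UNeq{} for flat systems''; your write-up simply spells out what that flattening looks like.

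One small misattribution: you identify normal-form preservation under flattening in the \emph{non-linear} case as the hardest step and as the place where this paper strengthens \cite{zv06,julian}. According to the paper's own introduction, however, the shallow-to-flat reduction was already established in \cite{zv06,julian} without a linearity hypothesis; the novelty of the present paper lies entirely in handling the \emph{flat} non-linear case (Theorem~\ref{thm:flat decidable}), not in the flattening step. So the machinery of Lemma~\ref{lem:apply} and Proposition~\ref{prop:properties} is not needed here---those results are tailored to the analysis of flat systems, and in any case Proposition~\ref{prop:properties} concerns replacement by terms containing fresh \emph{variables}, not fresh constants. For the theorem at hand it suffices to cite the reduction from \cite{zv06,julian} and apply Theorem~\ref{thm:flat decidable}.
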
 

\section{Undecidability for Linear and Left/Right-Flat Systems}
%%%%%%%%%%%%%%%%%%%%%%%%%%%%%%%%%%%%%%%%%%%%%%%%%%%%%%%%%%%%%%%%%%%%%%%%%%%%

We begin by introducing a problem known to be undecidable.

\subsection{Post Correspondence Problem}

An instance $P$ of the Post Correspondence Problem (PCP) is defined as follows:
\begin{definition}
Given a finite set of tiles $\{ \langle u_i, v_i \rangle \enskip | \enskip 1 \le i \le n \}$ where $u_i,v_i$ are words under some finite alphabet $\Gamma$, we must decide whether a sequence of indices $i_1 \cdots i_k$ exists such that $u_{i_1} \cdots u_{i_k} = v_{i_1} \cdots v_{i_k}$.
\end{definition}

Given a PCP instance $P$ we consider $|P|$ to be the number of tiles defined for that instance. If a sequence of indices is meant as a candidate solution to the PCP instance, we call it a \emph{tile sequence}. We use the convention that $\Gamma^{\ast}$ refers to the words generated by the alphabet.

\subsection{Linear and Right-Flat Construction}
\label{lrfc}

We will construct a linear and right-flat TRS $\mathcal{R}$ that reduces PCP to the $UN^=$ problem between two normal forms: $0$ and $1$. Thus, if $0 \overset{\ast}{\longleftrightarrow} 1$ we violate $UN^=$ and there is a solution to $P$; otherwise, $P$ has no solution and $UN^=$ is preserved. A correct reduction implies $UN^=$ must be undecidable for this class of TRS.

Our construction will be composed of two parts. Part one will convert an arbitrary string into a pair of identical strings. The only normal form found in part one is $0$. Part two will convert an arbitrary tile sequence into a pair of strings generated by the tiles. The only normal form found in part two is $1$. Both parts can reach a solution to $P$. Thus, if a solution exists, then $0 \overset{\ast}{\longleftrightarrow} 1$.

Since strings are central to our construction we will work with a few conventions. The terms representing strings are sequences of unary symbols ended by $\emptyset$. Furthermore, strings and the terms that represent them are used interchangeably; we may refer to $a(b(\emptyset))$ as $ab$. For a string $s$ we denote its reversal $s^R$. Note that for $s = s_1 s_2$, we have $s^R = s_2^R s_1^R$. We liberally use $\gamma$ as a placeholder for the appropriate symbol in the alphabet $\Gamma$.

Our initial set of rules corresponds to part one:
\begin{equation*}
\mathcal{R}_{0} \enskip := \enskip
\left \{
\begin{aligned}
	\left .
	\begin{aligned}
	f(\gamma(x),\emptyset,\emptyset) &\rightarrow 0 \\
	f(\emptyset,x,y) &\rightarrow g(x,y)
	\end{aligned}
	\enskip
	\right |
	\begin{aligned}
	\gamma \in \Gamma
	\end{aligned}
\end{aligned}
\right \}
\end{equation*}
\begin{equation*}
\mathcal{R}_{S} \enskip := \enskip
\left \{
\begin{aligned}
	\left .
	\begin{aligned}
	f(\gamma(x),y,z) &\rightarrow f^{(\gamma)}(x,y,z) \\
	f(x,\gamma(y),\gamma(z)) &\rightarrow f^{(\gamma)}(x,y,z)
	\end{aligned}
	\enskip
	\right |
	\begin{aligned}
	\gamma \in \Gamma
	\end{aligned}
\end{aligned}
\right \}
\end{equation*}
Since we are working with equivalences, the orientation of a rule has no bearing on reachability. We use this to our advantage by simulating the rule $f(\gamma(x),y,z) \longleftrightarrow f(x,\gamma(y),\gamma(z))$. Notice the following structure:
\begin{displaymath}
	\xymatrix@R=3em@C=8em@!0{
	f(\gamma(x),y,z) \ar[]+D+/va(180) 1ex/;[dr]+L & & f(x,\gamma(y),\gamma(z)) \ar[]+D+/va(0) 1ex/;[dl]+R \\
	& f^{(\gamma)}(x,y,z) &
}
\end{displaymath}

In our construction, the superscripted version of a function symbol will have a reduced set of applicable rewrites. By making sure only two rewrites apply, these two rewrites can be considered a single rewrite. In a derivation between non-superscripted terms, rewriting to $f^{(\gamma)}$ fixes the next rewrite we perform. Therefore, we should view $\mathcal{R}_S$ as the set of rules $f(\gamma(x),y,z) \longleftrightarrow f(x,\gamma(y),\gamma(z))$.

The following lemmas concern $\mathcal{R}_0 \cup \mathcal{R}_S$ unless otherwise specified.

\begin{lemma}
$f(x,\emptyset,\emptyset) \underset{\mathcal{R}_S}{\overset{\ast}{\longleftrightarrow}} f(\emptyset, y, z)$ iff $x,y,z \in \Gamma^{\ast}$. \label{f:form}
\end{lemma}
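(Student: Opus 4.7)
The plan is to establish a structural invariant on terms in the $\leftrightarrow^*_{\mathcal{R}_S}$-class of $f(x,\emptyset,\emptyset)$ and read off the characterization from it. The guiding observation is that, as noted in the text, $\mathcal{R}_S$ simulates the ``hop'' $f(\gamma(x),y,z) \leftrightarrow f(x,\gamma(y),\gamma(z))$ via the intermediate $f^{(\gamma)}(x,y,z)$, and each rule can only fire on terms whose matched arguments begin with a symbol from $\Gamma$.

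For the forward direction ($\Rightarrow$), I would prove by induction on derivation length the following invariant: every term $t$ in the $\leftrightarrow^*_{\mathcal{R}_S}$-class of $f(x,\emptyset,\emptyset)$ has one of the shapes $f(s,w,w)$ or $f^{(\gamma)}(s,w,w)$ for some $s,w \in \Gamma^{\ast}$ and $\gamma \in \Gamma$, satisfying $w^R\cdot s = x$ (respectively $w^R\cdot \gamma \cdot s = x$) as string equations. The base case is the trivial derivation. The inductive step inspects each of the four rewrite schemas (two rules, each read in either direction): shifting a $\gamma$ between the first argument and the $f^{(\gamma)}$-label, or between the second/third arguments and the label, preserves both the concatenation $w^R\cdot s = x$ and the equality of the second and third arguments. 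The latter is the delicate point: it holds precisely because the schema $f(x,\gamma(y),\gamma(z)) \leftrightarrow f^{(\gamma)}(x,y,z)$ only matches when both arguments begin with the \emph{same} $\gamma$, so the two sides cannot desynchronize. Specializing the invariant to $t = f(\emptyset,y,z)$ forces $s = \emptyset$, $y = z = w \in \Gamma^{\ast}$, and $x = w^R \in \Gamma^{\ast}$; in particular all of $x$, $y$, $z$ lie in $\Gamma^{\ast}$.

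For the backward direction ($\Leftarrow$), the forward analysis actually pins down $y = z = x^R$ in any successful derivation, so the nontrivial content is the existence of a derivation when $x \in \Gamma^{\ast}$. I would prove by induction on $|x|$ the strengthened statement: for all $x, w \in \Gamma^{\ast}$,
\[
f(x,w,w) \;\leftrightarrow^*_{\mathcal{R}_S}\; f(\emptyset,\, x^R\cdot w,\, x^R\cdot w).
\]
The base case $x = \emptyset$ is immediate. For $x = \gamma\cdot x'$, use the two-step hop $f(\gamma\cdot x',w,w) \to f^{(\gamma)}(x',w,w) \leftarrow f(x',\gamma\cdot w,\gamma\cdot w)$ and invoke the inductive hypothesis on $f(x',\gamma\cdot w,\gamma\cdot w)$.

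The main obstacle is formulating the invariant so that it correctly captures both the $f$ and $f^{(\gamma)}$ shapes simultaneously and survives every rewrite direction; in particular, verifying that the $s_2 = s_3$ clause is preserved, which relies essentially on the shared-$\gamma$ requirement of the second rule schema. Everything else is bookkeeping on string concatenation.
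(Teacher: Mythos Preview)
Your proof is correct, but it does substantially more work than the paper's. The paper dispatches this lemma in a single sentence: the rules in $\mathcal{R}_S$ only add or remove symbols $\gamma \in \Gamma$ at the heads of the three arguments, so if the first argument can be taken from $x$ to $\emptyset$ and the other two from $\emptyset$ to $y$ and $z$, all three must be words over $\Gamma$. That is the entire argument---just the forward implication, read off from the rule shapes.

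Your invariant (arguments two and three always coincide, together with the concatenation identity $w^R\cdot s = x$) is much stronger than what this lemma requires; it is really the content of the subsequent reachability and injectivity lemmas in the paper. You have effectively collapsed three lemmas into one. That buys you a clean, self-contained characterization, at the cost of front-loading bookkeeping that the paper spreads across separate, more lightly argued results.

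You also correctly noticed that the ``iff'' is loose: as literally stated, the converse is false (arbitrary $x,y,z \in \Gamma^{\ast}$ do not yield a derivation). The paper does not engage with this---it proves and uses only the forward direction---whereas your charitable reading (existence when $y = z = x^R$) is the right salvage, and your inductive construction for it is precisely the paper's next lemma.
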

\begin{proof}
Clearly the rules in $\mathcal{R}_{S}$ only allow the removal of symbols $\gamma \in \Gamma$.
\end{proof}

\begin{lemma}
$f(s,y,z) \underset{\mathcal{R}_S}{\overset{\ast}{\longleftrightarrow}} f(\emptyset, s^R(y), s^R(z))$ where $s \in \Gamma^{\ast}$. \label{f:reach}
\end{lemma}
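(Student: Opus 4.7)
The plan is to prove this by induction on the length of $s$, using the simulated rewrite $f(\gamma(x),y,z) \leftrightarrow f(x,\gamma(y),\gamma(z))$ from $\mathcal{R}_S$ (as established by the diagram with $f^{(\gamma)}$) as the single inductive step.

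For the base case, take $s = \emptyset$. Then $s^R = \emptyset$ and by the string/term convention $s^R(y) = y$, $s^R(z) = z$, so the conclusion $f(\emptyset,y,z) \leftrightarrow^{\ast} f(\emptyset,y,z)$ holds via the empty derivation.

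For the inductive step, write $s = \gamma \cdot s'$ with $\gamma \in \Gamma$ and $|s'| < |s|$. Apply the simulated rule once at the root:
\begin{equation*}
f(\gamma(s'),y,z) \;\longleftrightarrow\; f(s',\gamma(y),\gamma(z)).
\end{equation*}
By the inductive hypothesis applied to $s'$ (with $\gamma(y)$ and $\gamma(z)$ substituted for $y$ and $z$),
\begin{equation*}
f(s',\gamma(y),\gamma(z)) \;\overset{\ast}{\longleftrightarrow}\; f(\emptyset,\, (s')^R(\gamma(y)),\, (s')^R(\gamma(z))).
\end{equation*}
Finally I would invoke the string-reversal identity $(s)^R = (s')^R \cdot \gamma$, which at the level of terms gives $s^R(w) = (s')^R(\gamma(w))$ for any $w$. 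Substituting $w = y$ and $w = z$ yields the desired endpoint $f(\emptyset,s^R(y),s^R(z))$, and concatenating the two derivations finishes the induction.

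I do not expect a serious obstacle here: the only subtle point is keeping the term-vs-string notation straight, specifically confirming $(\gamma \cdot s')^R(w) = (s')^R(\gamma(w))$, which follows immediately by unfolding the definition $s^R = s_2^R s_1^R$ stated just before the lemma. Everything else is mechanical once the single-step equivalence from $\mathcal{R}_S$ is taken as given.
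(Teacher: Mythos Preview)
Your proof is correct and follows essentially the same route as the paper's: induction on $|s|$, using the simulated step $f(\gamma(s'),y,z)\leftrightover{*} f(s',\gamma(y),\gamma(z))$ and then the inductive hypothesis with $\gamma(y),\gamma(z)$ in place of $y,z$. The only cosmetic difference is that the paper takes $|s|=1$ as the base case while you take $s=\emptyset$; your choice is arguably cleaner since it covers the empty string directly and you make the reversal identity $(\gamma s')^R=(s')^R\gamma$ explicit where the paper leaves it implicit.
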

\begin{proof}
We proceed by induction on the length of $s$. For $|s| = 1$, the rules in $\mathcal{R}_{S}$ suffice. Suppose our lemma holds for $|s| = n-1$. Given $s$ of length $n$, we can write $s = \gamma(s')$ for some $\gamma \in \Gamma$. The rules in $\mathcal{R}_{S}$ allow $f(\gamma(s'),y,z) \overset{\ast}{\longleftrightarrow} f(s', \gamma(y), \gamma(z))$. If we consider $y' = \gamma(y)$ and $z' = \gamma(z)$ our induction hypothesis applies and we are done.
\end{proof}

\begin{lemma}
Let $p = f(x,\emptyset,\emptyset)$ for some $x$. Let $q = f(\emptyset,y,z)$ for some $y,z$. For a pair of terms $(p,q)$ where $p \underset{\mathcal{R}_S}{\overset{\ast}{\longleftrightarrow}} q$ then:
\begin{itemize}
	\item $\nexists \, p' = f(x',\emptyset,\emptyset)$ for some $x' \ne x$ such that $p' \underset{\mathcal{R}_S}{\overset{\ast}{\longleftrightarrow}} q$
	\item $\nexists \, q' = f(\emptyset,y',z')$ for some $(y',z') \ne (y,z)$ such that $p \underset{\mathcal{R}_S}{\overset{\ast}{\longleftrightarrow}} q'$
\end{itemize}
\label{f:inject}
\end{lemma}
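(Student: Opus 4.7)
The plan is to exhibit two quantities that are preserved along any $\mathcal{R}_S$-proof starting from a term of the form $f(x,\emptyset,\emptyset)$, and then use them to determine the endpoint uniquely in terms of $x$. First I would observe that every rule in $\mathcal{R}_S$ has either $f$ or $f^{(\gamma)}$ at the root of both sides, and that neither symbol ever occurs inside a string term over $\Gamma^{\ast}$. Hence, starting from any term whose only occurrence of $f$ or $f^{(\gamma)}$ is at the root, every step of a proof is a root rewrite, and the intermediate terms all have the form $f(a,b,c)$ or $f^{(\gamma)}(a,b,c)$ with $a,b,c \in \Gamma^{\ast}$.

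Next I would define a content function $C$ by setting $C(f(a,b,c)) = a^R \cdot b$ and $C(f^{(\gamma)}(a,b,c)) = a^R \cdot \gamma \cdot b$, where $\cdot$ denotes string concatenation. A direct check of the four rule instances shows that $C$ is preserved: both sides of $f(\gamma(x),y,z) \longleftrightarrow f^{(\gamma)}(x,y,z)$ have content $x^R \gamma y$, and both sides of $f(x,\gamma(y),\gamma(z)) \longleftrightarrow f^{(\gamma)}(x,y,z)$ do as well. Furthermore, because both rule schemes treat the second and third arguments completely symmetrically, the equality of those two coordinates is also invariant along any derivation.

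Applying these invariants to the hypothesis $p = f(x,\emptyset,\emptyset) \overset{\ast}{\longleftrightarrow}_{\mathcal{R}_S} q$: at $p$ we have $C(p) = x^R$ and the second and third coordinates are both $\emptyset$, so the same must hold at $q$. Since $q = f(\emptyset,y,z)$, we conclude $y = z$ and $C(q) = \emptyset^R \cdot y = y = x^R$, so $q$ is forced to equal $f(\emptyset, x^R, x^R)$. Both non-existence clauses now drop out. For the first, if $x' \neq x$ then the unique $f(\emptyset,\cdot,\cdot)$-form reachable from $f(x',\emptyset,\emptyset)$ is $f(\emptyset, (x')^R, (x')^R)$, which differs from $q$ in its second (and third) coordinate. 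For the second, any pair $(y',z') \neq (x^R, x^R)$ violates one of the two invariants, so $f(\emptyset,y',z')$ cannot lie in the equivalence class of $p$.

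The main obstacle is identifying the right invariants in the first place; the content function $C$ has to balance the reversal on the first coordinate against concatenation on the second, and the extra $\gamma$ in the $f^{(\gamma)}$ case is needed precisely so that the pivot term matches both of its neighbors. Once $C$ and the coordinate-equality invariant are in hand, the verification is a routine calculation over four symmetric cases, and the uniqueness conclusion is immediate.
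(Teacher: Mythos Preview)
Your proof is correct and follows the same invariant-based strategy as the paper, with a slightly different decomposition. The paper first collapses the two $\mathcal{R}_S$ rules into the single simulated step $f(\gamma(x),y,z) \longleftrightarrow f(x,\gamma(y),\gamma(z))$ on non-superscripted terms and then uses the single conserved quantity $\pi(f(s_1,s_2,s_3)) = s_1^R s_2\, s_1^R s_3$. You instead keep the superscripted intermediaries explicit and split the invariant into two pieces: the content $C$ tracking $s_1^R s_2$ (with the pending $\gamma$ recorded at the $f^{(\gamma)}$ stage), plus the separate observation that equality of the second and third coordinates is preserved by every rule. These are equivalent packagings of the same information. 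Your explicit treatment of $f^{(\gamma)}$ avoids the informal ``simulated rule'' reduction, and isolating the coordinate-equality invariant makes the second non-existence clause go through more transparently than the paper's ``a similar argument applies to $q'$,'' where injectivity of the concatenated string $yz$ on terms $f(\emptyset,y,z)$ is not immediate without that extra observation.
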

\begin{proof}
We can consider $\mathcal{R}_S$ to be $\{ f(\gamma(x),y,z) \longleftrightarrow f(x,\gamma(y),\gamma(z)) \}$ since we are only interested in non-superscripted terms. Let $\pi$ be a mapping from terms of the form $f(s_1,s_2,s_3)$ to $s_1^R s_2 s_1^R s_3$. Suppose $p' \underset{\mathcal{R}_S}{\overset{\ast}{\longleftrightarrow}} q$. By Lemma~\ref{f:form}, $\pi$ is well defined for $p$, $p'$, and $q$. Clearly there is no $p' \ne p$ such that $\pi(p') = \pi(p)$. However, if $p' \underset{\mathcal{R}_S}{\overset{\ast}{\longleftrightarrow}} q$ then $\pi(p') = \pi(q) = \pi(p)$ since the value is conserved under $\mathcal{R}_S$. A similar argument applies to $q'$.
\end{proof}

Informally, we can show Lemma~\ref{f:inject} holds by observing there is no choice of rewrite if $f|_2 = f|_3 = \emptyset$. Once we apply that rewrite we are presented with a series of meaningless choices: either backtrack or perform the only other rewrite. This is the case until we reach a term where $f|_1 = \emptyset$ or we get stuck on the way. The situation is the same if we start at $f|_1 = \emptyset$ and work our way toward $f|_2 = f|_3 = \emptyset$.

\begin{lemma}
$0 \overset{\ast}{\longleftrightarrow} g(y,z)$ iff $(y,z) = (s^R, s^R)$ for some $s \in \Gamma^{\ast}\backslash \{ \varepsilon \}$. \label{f:g}
\end{lemma}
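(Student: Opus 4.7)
For $(\Leftarrow)$, given a non-empty $s = \gamma(s') \in \Gamma^{\ast}$, I concatenate three steps,
\[
0 \;\leftrightarrow\; f(s,\emptyset,\emptyset) \;\underset{\mathcal{R}_S}{\overset{\ast}{\longleftrightarrow}}\; f(\emptyset, s^R, s^R) \;\leftrightarrow\; g(s^R, s^R),
\]
using the two $\mathcal{R}_0$ rules at the ends and Lemma~\ref{f:reach} (specialized to $y = z = \emptyset$) in the middle.

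For $(\Rightarrow)$ I would induct on the proof length $n$ of a derivation $0 \overset{\ast}{\longleftrightarrow} g(y,z)$. The driving observation is that $0$ and $g$ appear on no rule's left-hand side, so the only way a proof step can create or destroy a top-level root of $0$ or $g(\cdot,\cdot)$ is via the root-level rule $f(\gamma(x),\emptyset,\emptyset)\leftrightarrow 0$ or $f(\emptyset,u,v)\leftrightarrow g(u,v)$; in particular $0$'s one-step neighbors are exactly the $f(\gamma(x),\emptyset,\emptyset)$-forms, and $g(y,z)$'s are exactly the $f(\emptyset,y,z)$-forms. Short cases $n\le 2$ are ruled out immediately. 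For $n\ge 3$ I pick the greatest index $i \in \{1,\dots,n-1\}$ with $t_i$ top-level of shape $0$ or $g(\cdot,\cdot)$, setting $i=0$ if no such index exists. In the $i=0$ subcase every intermediate term is $f$- or $f^{(\gamma)}$-headed and every top-level rewrite must be an $\mathcal{R}_S$-step (otherwise an intermediate top-level $0$ or $g(\cdot,\cdot)$ would appear), so the middle segment $f(\gamma(x),\emptyset,\emptyset) \underset{\mathcal{R}_S}{\overset{\ast}{\longleftrightarrow}} f(\emptyset,y,z)$ is governed by Lemmas~\ref{f:reach} and \ref{f:inject}, yielding $(y,z)=((\gamma x)^R,(\gamma x)^R)$ with $s=\gamma x$ non-empty. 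If $t_i=0$, the induction hypothesis on the strictly shorter suffix $0=t_i\overset{\ast}{\longleftrightarrow} g(y,z)$ closes the case. If $t_i=g(y',z')$, a short sublemma -- each $\mathcal{R}_S$-equivalence class contains at most one term of the form $f(\emptyset,\cdot,\cdot)$, proved by the same $\pi$-preservation/prefix-sliding reasoning underlying Lemma~\ref{f:inject} -- forces $(y',z')=(y,z)$, so the prefix $0\overset{\ast}{\longleftrightarrow} g(y,z)$ of length $i$ is a strictly shorter proof and the induction hypothesis applies.

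The main obstacle is justifying two technical points. First, that proof steps applied at proper subterms of an $f$- or $f^{(\gamma)}$-headed top-level term cannot introduce a new top-level occurrence of $0$ or $g(\cdot,\cdot)$; this is immediate because such steps preserve the top-level root, so the sequence of top-level roots across the proof is controlled entirely by root rewrites, which is all the case analysis requires. Second, the uniqueness sublemma used in the $t_i=g(y',z')$ subcase; it is not stated explicitly earlier, but it follows from the same prefix pop/push dynamics used to prove Lemmas~\ref{f:form}--\ref{f:reach}: from $f(\emptyset,y,z)$ no ``pop'' is possible (the first argument is already $\emptyset$), so the only $\mathcal{R}_S$-moves push a common $\Gamma$-prefix of $y,z$ onto the first argument, moving strictly away from the $f(\emptyset,\cdot,\cdot)$-form; hence the $\mathcal{R}_S$-class of $f(\emptyset,y,z)$ contains a unique $f(\emptyset,\cdot,\cdot)$-representative.
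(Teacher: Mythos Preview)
Your argument is correct, but it is considerably more elaborate than the paper's. The paper dispatches the lemma in four sentences: it notes that the only rewrites touching a root of $0$ or $g(\cdot,\cdot)$ are the two $\mathcal{R}_0$ rules, so any proof must have the shape
\[
0 \;\leftarrow\; f(x,\emptyset,\emptyset) \;\overset{\ast}{\longleftrightarrow}\; f(\emptyset,y,z) \;\rightarrow\; g(y,z),
\]
and then invokes Lemma~\ref{f:form} (so $x\in\Gamma^{\ast}$), the form of the $\mathcal{R}_0$ rule (so $x\neq\varepsilon$), Lemma~\ref{f:reach} (so $f(x,\emptyset,\emptyset)$ reaches $f(\emptyset,x^R,x^R)$), and Lemma~\ref{f:inject} (so this target is the \emph{only} $f(\emptyset,\cdot,\cdot)$ reachable, forcing $(y,z)=(x^R,x^R)$). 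There is no induction on proof length, no greatest-index case split, and no separate sublemma.

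Your induction is essentially a rigorous unpacking of the paper's one-line ``thus our proof will have the form\dots'' assertion: the $t_i=0$ and $t_i=g(y',z')$ subcases handle derivations that loop back through $0$ or through some other $g$-headed term, possibilities the paper silently discards. Your uniqueness sublemma (at most one $f(\emptyset,\cdot,\cdot)$ per $\mathcal{R}_S$-class) is not stated separately in the paper because it is exactly the second bullet of Lemma~\ref{f:inject} read via the $\pi$-invariant $(s_1^R s_2,\,s_1^R s_3)$; so you are not introducing a new idea there, only restating the same invariant in the special case $s_1=\emptyset$. What your route buys is explicitness about why revisiting $0$ or $g(\cdot,\cdot)$ is harmless; what the paper's route buys is brevity, since once one accepts the shape claim the three lemmas finish immediately. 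Note that the one point you flag as an obstacle---non-root rewrites in the middle segment possibly falling outside $\mathcal{R}_S$---is elided in the paper's proof as well; both arguments implicitly rely on the $\pi$-invariant (or equivalently on the arguments being $\Gamma$-strings once one is in the $f/f^{(\gamma)}$ region) to make subterm steps irrelevant.
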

\begin{proof}
There is only one rewrite applicable at each end term: $f(x,\emptyset,\emptyset) \rightarrow 0$ and $f(\emptyset,y,z) \rightarrow g(y,z)$. Thus, our proof will have the form $0 \leftarrow f(x,\emptyset,\emptyset) \overset{\ast}{\longleftrightarrow} f(\emptyset,y,z) \rightarrow g(y,z)$. We know by Lemma~\ref{f:form} that $x = s$ for some $s \in \Gamma^{\ast}$. Furthermore, $s$ cannot be the empty string due to how we constructed the rules in $\mathcal{R}_0$. By Lemma~\ref{f:reach} we know $(y,z) = (s^R, s^R)$. By Lemma~\ref{f:inject} we know $\impliedby$.
\end{proof}
The first part of our construction is concluded. The second part of our construction uses many of the same techniques.

For each tile $\langle u_i, v_i \rangle$ in $P$ let it be represented by the function symbol $t_i \! : \! 1$.  Let $n = \max(|u_i|,|v_i|)$. We create $n$ rules for each tile. Note that $\gamma^n_{u_i}$ and $\gamma^n_{v_i}$ refer to the $n$th symbol in $u_i$ and $v_i$, respectively. If $k > |u_i|$ then $\gamma^k_{u_i}$ leaves the variable unchanged (the concrete instantiation of the rule has only the variable in that position). Same for $k > |v_i|$. Here are the rules: 
\begin{equation*}
\mathcal{R}_{1} \enskip := \enskip
\left \{
\begin{aligned}
	h(t_i(x),\emptyset,\emptyset) &\rightarrow 1 \\
	h(\emptyset,x,y) &\rightarrow g(x,y)
\end{aligned}
\right \}
\end{equation*}
\begin{equation*}
\mathcal{R}_{T} \enskip := \enskip
\left \{
\begin{aligned}
	h(t_i(x),y,z) &\rightarrow h^{(i,0)}(x,y,z) \\
	h^{(i,k)}(x,\gamma_{u_i}^k(y),\gamma_{v_i}^k(z)) &\rightarrow h^{(i,k-1)}(x,y,z) \\
	h(x,\gamma_{u_i}^n(y),\gamma_{v_i}^n(z)) &\rightarrow h^{(i,n-1)}(x,y,z) \\
\end{aligned}
\right \}
\end{equation*}

The rules in $\mathcal{R}_T$ were constructed to simulate rules, much like $\mathcal{R}_S$. However, in $\mathcal{R}_T$ we fix a longer chain of rewrites so we can simulate $h(t_i(x),y,z) \overset{\ast}{\longleftrightarrow} h(x,u_i^R(y),v_i^R(z))$ for non-superscripted terms:
\begin{displaymath}
	\xymatrix@R=3em@C=8em@!0{
	h(T_i(x),y,z) \ar[]+D+/va(180) 1ex/;[dr]+L & & h(x,u_i^R(y),v_i^R(z)) \ar[]+D+/va(0) 1ex/;[dl]+R_(.5){\ast} \\
	& h^{(i,0)}(x,y,z) &
}
\end{displaymath}
For example, if $t_1 = \langle aab, bb \rangle$ then we would have a sequence of rules:
\begin{align*}
h(t_1(x),y,z) &\rightarrow h^{(1,0)}(x,y,z) \\
h^{(1,1)}(x,a(y),b(z)) &\rightarrow h^{(1,0)}(x,y,z) \\
h^{(1,2)}(x,a(y),b(z)) &\rightarrow h^{(1,1)}(x,y,z) \\
h(x,b(y),z) &\rightarrow h^{(1,2)}(x,y,z)
\end{align*}

The following lemmas concern $\mathcal{R}_1 \cup \mathcal{R}_T$ unless otherwise specified.

\begin{lemma}
$h(x,\emptyset,\emptyset) \underset{\mathcal{R}_T}{\overset{\ast}{\longleftrightarrow}} h(\emptyset, y,z)$ iff $x = t_{i_1} \cdots t_{i_n}$ and $y,z \in \Gamma^{\ast}$. \label{h:form}
\end{lemma}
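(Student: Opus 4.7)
The plan is to mirror the proof of Lemma~\ref{f:form}, which was essentially an inspection of the allowed rewrite steps, but taking extra care because $\mathcal{R}_T$ introduces a longer chain of superscripted symbols than $\mathcal{R}_S$ did. I first observe that, just as $\mathcal{R}_S$ simulated the composite step $f(\gamma(x),y,z) \leftrightarrow f(x,\gamma(y),\gamma(z))$, the chain of $h^{(i,k)}$-rules in $\mathcal{R}_T$ simulates the composite step $h(t_i(x),y,z) \leftrightarrow h(x, u_i^R(y), v_i^R(z))$. This is because every superscripted symbol $h^{(i,k)}$ occurs on the left-hand side of at most two rules, so once a derivation between non-superscripted terms enters the chain at $h^{(i,0)}$, it is forced either to backtrack or to walk down the chain until it exits at $h(x,\gamma_{u_i}^n(y),\gamma_{v_i}^n(z))$; meaningful progress therefore always amounts to a single composite step.

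For the forward direction ($\Rightarrow$), I would inspect the rules of $\mathcal{R}_T$ and note that (i) the only way for a non-superscripted $h$-term to lose or gain a symbol at the root of its first argument is through the composite step above, which requires that root to be some tile symbol $t_i$; and (ii) the only way to modify the roots of the second and third arguments is by inserting or deleting symbols $\gamma_{u_i}^k, \gamma_{v_i}^k \in \Gamma$. Consequently, if $h(x,\emptyset,\emptyset) \leftrightarrow^\ast h(\emptyset,y,z)$, then the first argument of the starting term must be a chain of tile symbols terminated by $\emptyset$, i.e.\ $x = t_{i_1}\cdots t_{i_n}$, and the second and third arguments of the final term must lie in $\Gamma^\ast$.

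For the backward direction ($\Leftarrow$), I would proceed by induction on $n$, the number of tile symbols in $x$. The base case $n=0$ is immediate, taking $y=z=\emptyset$. For the inductive step, write $x = t_{i_1}(x')$ where $x' = t_{i_2}\cdots t_{i_n}$. Applying the composite step simulated by $\mathcal{R}_T$ once yields $h(x,\emptyset,\emptyset) \leftrightarrow^\ast h(x', u_{i_1}^R, v_{i_1}^R)$, and then repeatedly applying the composite step (once for each remaining tile) produces a derivation to $h(\emptyset,(u_{i_1}\cdots u_{i_n})^R,(v_{i_1}\cdots v_{i_n})^R)$, which is of the required form since both strings are in $\Gamma^\ast$.

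The main obstacle is making the simulation argument airtight in light of the padding used when $|u_i| < n$ or $|v_i| < n$: in those cases the concrete instantiation of $\gamma_{u_i}^k$ or $\gamma_{v_i}^k$ is simply the variable, so the corresponding superscripted step neither adds nor removes a symbol from that argument. I would handle this by making the composite step's claim precise — one pass through the chain from $h^{(i,n-1)}$ down to $h^{(i,0)}$ prepends exactly $u_i^R$ to the second argument and exactly $v_i^R$ to the third — and then verifying by straightforward case analysis on each superscripted rule that no other behavior is possible. Once the composite step is established in this tight form, both directions of the iff follow from arguments parallel to those in Lemmas~\ref{f:form} and~\ref{f:reach}.
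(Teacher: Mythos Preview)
Your forward-direction argument is the same idea as the paper's, just more elaborate: the paper's entire proof is the one-line observation that the rules in $\mathcal{R}_T$ only remove tile symbols $t_i$ from $h|_1$ and only add or remove symbols $\gamma \in \Gamma$ at $h|_2$ and $h|_3$. Your careful composite-step simulation is sound but more than this lemma needs; that analysis is really the content the paper saves for Lemma~\ref{h:inject}.

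Where you diverge is in attempting the backward direction, which the paper does not prove. In fact the $\Leftarrow$ direction, read literally (for \emph{all} $y,z \in \Gamma^\ast$ and all tile sequences $x$), is false: by Lemma~\ref{h:inject} the pair $(y,z)$ reachable from a given $x$ is unique. Your inductive argument does not establish the universal claim either; it only shows that \emph{some} $y,z \in \Gamma^\ast$ work, which is precisely Lemma~\ref{h:reach}, not the backward half of this lemma. So the ``iff'' here is loose, the paper treats it as a forward implication only (and that is the only direction used in Lemma~\ref{h:g}), and your proposal should either note this or reformulate the backward direction existentially before proving it.
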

\begin{proof}
Clearly, the rules in $\mathcal{R}_T$ only allow the removal of symbols $t_i$ from $h|_1$ and the removal of symbols $\gamma \in \Gamma$ from $h|_2$ and $h|_3$.
\end{proof}

\begin{lemma}
$h(t,y,z) \underset{\mathcal{R}_T}{\overset{\ast}{\longleftrightarrow}} h(\emptyset, s_a^R(y),s_b^R(z))$ where $t = t_{i_1} \cdots t_{i_n}, \enskip s_a = u_{i_1} \cdots u_{i_n}$ and $s_b = v_{i_1} \cdots v_{i_n}$. \label{h:reach}
\end{lemma}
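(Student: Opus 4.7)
The plan is to mirror the proof of Lemma~\ref{f:reach}, inducting on the length of the tile sequence rather than on the length of a string. Let $n$ denote the length of the tile sequence (so $t = t_{i_1} \cdots t_{i_n}$); I will overload notation and continue to let $N_i = \max(|u_i|, |v_i|)$ denote the per-tile parameter used to define the rules in $\mathcal{R}_T$.

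First I would establish the single-tile simulation, which plays the role that $\mathcal{R}_S$'s simulation plays for $\mathcal{R}_0$: for any tile $t_i$ and any $x, y, z$,
\[
   h(t_i(x), y, z) \;\overset{\ast}{\longleftrightarrow}\; h(x, u_i^R(y), v_i^R(z)).
\]
This is established by chaining the $N_i$ rules for tile $i$. Starting from $h(t_i(x), y, z)$, apply the first rule to get $h^{(i,0)}(x, y, z)$; then traverse the rules $h^{(i,k)}(x, \gamma^k_{u_i}(y'), \gamma^k_{v_i}(z')) \rightarrow h^{(i,k-1)}(x, y', z')$ in the reverse direction for $k = 1, \ldots, N_i-1$, each step prepending one symbol of $u_i$ to the second argument and one symbol of $v_i$ to the third; finally apply the last rule $h(x, \gamma^{N_i}_{u_i}(y''), \gamma^{N_i}_{v_i}(z'')) \rightarrow h^{(i, N_i-1)}(x, y'', z'')$ in reverse to arrive at $h(x, u_i^R(y), v_i^R(z))$. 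I would emphasize the convention that $\gamma^k_{u_i}$ simply leaves the argument alone when $k > |u_i|$, so that words $u_i$ and $v_i$ of different lengths are each prepended correctly in reverse regardless of $N_i = \max(|u_i|, |v_i|)$.

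With the single-tile simulation in hand, the induction on $n$ is routine. For $n = 1$ the result is precisely the single-tile simulation with $y, z$ replaced by $\emptyset$ (since $s_a = u_{i_1}$, $s_b = v_{i_1}$, and the $\emptyset$ in the first argument comes from writing $t_{i_1}(x) = t_{i_1}(\emptyset)$ before the rewrite and then noting that the final term has $\emptyset$ in its first argument). For the inductive step, write $t = t_{i_1}(t')$ where $t' = t_{i_2} \cdots t_{i_n}(\emptyset)$, apply the single-tile simulation to get
\[
   h(t_{i_1}(t'), y, z) \;\overset{\ast}{\longleftrightarrow}\; h(t', u_{i_1}^R(y), v_{i_1}^R(z)),
\]
and then apply the inductive hypothesis to $h(t', u_{i_1}^R(y), v_{i_1}^R(z))$. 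Composing the resulting equivalences and using the identity $(u_{i_2} \cdots u_{i_n})^R \, u_{i_1}^R = (u_{i_1} u_{i_2} \cdots u_{i_n})^R = s_a^R$ (and analogously for $s_b$) delivers the desired equivalence.

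The main obstacle is purely bookkeeping: writing the single-tile simulation carefully so that the reader can verify that the chain of superscripted rewrites really does prepend exactly $u_i^R$ and $v_i^R$, especially when $|u_i| \ne |v_i|$. Everything else is an entirely parallel copy of the argument used for Lemma~\ref{f:reach}.
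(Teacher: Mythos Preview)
Your proposal is correct and follows essentially the same approach as the paper: induction on the length of the tile sequence, with the single-tile simulation $h(t_i(x),y,z) \overset{\ast}{\longleftrightarrow} h(x,u_i^R(y),v_i^R(z))$ serving as the inductive engine. The paper's proof is terser (it simply asserts that ``the rules in $\mathcal{R}_T$ allow'' the single-tile step), whereas you spell out the superscripted chain explicitly; one small slip is that in your base case you write ``with $y,z$ replaced by $\emptyset$'' when you mean $x$ is instantiated to $\emptyset$, but your parenthetical makes clear you have the right picture.
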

\begin{proof}
We proceed by induction on the length of $t$. For $|t| = 1$, the rules in $\mathcal{R}_{T}$ suffice. Suppose our lemma holds for $|t| = n-1$. Given $t$ of length $n$, we can write $t = t_i(t')$ for some $t_i$. The rules in $\mathcal{R}_{T}$ allow $h(t_i(t'),y,z) \overset{\ast}{\longleftrightarrow} h(t', u_i^R(y), v_i^R(z))$. If we consider $y' = u_i^R(y)$ and $z' = v_i^R(z)$ our induction hypothesis applies and we are done.
\end{proof}

\begin{lemma}
Let $p = h(x,\emptyset,\emptyset)$ for some $x$. Let $q = h(\emptyset,y,z)$ for some $y,z$. For a pair of terms $(p,q)$ where $p \underset{\mathcal{R}_T}{\overset{\ast}{\longleftrightarrow}} q$ then:
\begin{itemize}
	\item $\nexists \, p' = h(x',\emptyset,\emptyset)$ for some $x' \ne x$ such that $p' \underset{\mathcal{R}_T}{\overset{\ast}{\longleftrightarrow}} q$
	\item $\nexists \, q' = h(\emptyset,y',z')$ for some $(y',z') \ne (y,z)$ such that $p \underset{\mathcal{R}_T}{\overset{\ast}{\longleftrightarrow}} q'$
\end{itemize}
\label{h:inject}
\end{lemma}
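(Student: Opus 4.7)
The plan is to follow the structure of Lemma~\ref{f:inject}'s proof. First, as in that proof, I would view $\mathcal{R}_T$ (restricted to derivations between non-superscripted $h$-terms) as the single ``meta-rule''
\[ h(t_i(x), y, z) \ \leftrightarrow \ h(x, u_i^R(y), v_i^R(z)), \]
since the superscripted chain $h^{(i,0)} \to h^{(i,1)} \to \cdots \to h^{(i,n-1)}$ between the two endpoints is deterministic modulo backtracking (the same informal observation made after Lemma~\ref{f:inject} for $\mathcal{R}_S$: at each superscripted vertex the only options are to continue along the forced arc or retrace).

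Next, I would define a $\pi$-invariant analogous to the map $f(s_1, s_2, s_3) \mapsto s_1^R s_2 s_1^R s_3$ used for $f$. For a tile sequence $s_1 = t_{i_1}(\cdots t_{i_m}(\emptyset))$, set $U(s_1) := u_{i_1} u_{i_2} \cdots u_{i_m}$ and $V(s_1) := v_{i_1} v_{i_2} \cdots v_{i_m}$, with $U(\emptyset) = V(\emptyset) = \varepsilon$; then put
\[ \pi(h(s_1, s_2, s_3)) := (U(s_1)^R\, s_2,\ V(s_1)^R\, s_3). \]
A direct calculation using $U(t_i(x)) = u_i U(x)$ and $V(t_i(x)) = v_i V(x)$ shows that both endpoints of the meta-rule project to the common pair $(U(x)^R u_i^R y,\ V(x)^R v_i^R z)$, so $\pi$ is preserved under $\mathcal{R}_T$-rewrites. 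By Lemma~\ref{h:form}, $\pi$ is well-defined on $p$, $p'$, $q$, and on any candidate $q'$ of the stated forms.

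Finally, I would compute the $\pi$-values at the relevant endpoints: $\pi(p) = (U(x)^R, V(x)^R)$, $\pi(p') = (U(x')^R, V(x')^R)$, $\pi(q) = (y, z)$, $\pi(q') = (y', z')$. Conservation of $\pi$ gives $\pi(p) = \pi(p') = \pi(q)$ for the first bullet and $\pi(q) = \pi(q') = \pi(p)$ for the second. The second bullet is immediate, since $\pi(q) = \pi(q')$ reads literally as $(y, z) = (y', z')$, whence $q = q'$. The first bullet reduces to showing that the encoding $x \mapsto (U(x), V(x))$ is injective on tile sequences. I expect this to be the main step requiring care: in the analogous step for Lemma~\ref{f:inject} the corresponding injectivity was automatic because each alphabet letter is its own singleton ``tile'' (effectively $U(x) = V(x) = x$), whereas here one must argue by induction on $|x|$, peeling off the outermost tile $t_{i_1}$ using that $u_{i_1}$ determines a specific prefix of $U(x)$ and $v_{i_1}$ a specific prefix of $V(x)$, then applying the induction hypothesis to the remainder.
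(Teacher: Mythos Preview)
Your plan mirrors the paper's proof closely: collapse $\mathcal{R}_T$ to the meta-rule $h(t_i(x),y,z)\leftrightarrow h(x,u_i^R(y),v_i^R(z))$, define a conserved quantity $\pi$, and read off uniqueness. The only cosmetic difference is that the paper takes $\pi$ to be the single string $(\pi_a(s_1))^R s_2\,(\pi_b(s_1))^R s_3$ whereas you keep a pair $(U(s_1)^R s_2,\,V(s_1)^R s_3)$; your version is cleaner, and your treatment of the second bullet ($\pi(q)=\pi(q')$ forces $(y,z)=(y',z')$) is correct and matches the paper.

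The gap is in the first bullet. Your proposed induction --- peel off $t_{i_1}$ because $u_{i_1}$ is a prefix of $U(x)$ and $v_{i_1}$ a prefix of $V(x)$ --- cannot be completed in general, because the prefix pair need not determine $i_1$. Concretely, take tiles $t_1=\langle a,b\rangle$ and $t_2=\langle aa,bb\rangle$: then $x=t_1t_1$ and $x'=t_2$ satisfy $(U(x),V(x))=(aa,bb)=(U(x'),V(x'))$, and one checks with the meta-rule that $h(t_1t_1\emptyset,\emptyset,\emptyset)$ and $h(t_2\emptyset,\emptyset,\emptyset)$ both reach $q=h(\emptyset,aa,bb)$. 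So $x\mapsto(U(x),V(x))$ is \emph{not} injective on tile sequences, and the first bullet as stated fails for such instances. The paper's proof has the very same gap --- its sentence ``Clearly there is no $p'\ne p$ such that $\pi(p')=\pi(p)$ (holds as long as $t_i\ne t_j$ for $i\ne j$)'' is not justified by distinctness of tile \emph{symbols}; it would need distinctness of the pairs $(u_i,v_i)$ together with a prefix-freeness condition that PCP does not provide. Fortunately, only the second bullet is actually consumed downstream (Lemma~\ref{h:g} needs to pin down $(y,z)$ from $x$, not the converse), and that half of both your argument and the paper's is sound. (Note also that the paper explicitly warns, just after this lemma, that the ``informal'' determinism picture you borrow from the $\mathcal{R}_S$ case does not carry over here; the reduction to the meta-rule is still valid, but the justification should go through the invariant, not through the informal remark.)
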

\begin{proof}
We can consider $\mathcal{R}_T$ to be $\{ h(t_i(x),y,z) \overset{\ast}{\longleftrightarrow} h(x,u_i^R(y),v_i^R(z)) \}$ since we are only interested in non-superscripted terms. Let $\pi_a, \pi_b$ be mappings from tile sequences $t_{i_1}\cdots t_{i_n}$ to $u_{i_1}\cdots u_{i_n}$ and $v_{i_1}\cdots v_{i_n}$, respectively. Let $\pi$ be a mapping from terms of the form $h(t_1,s_2,s_3)$ to $(\pi_a(t_1))^R s_2 (\pi_b(t_1))^R s_3$. Suppose $p' \underset{\mathcal{R}_T}{\overset{\ast}{\longleftrightarrow}} q$. By Lemma~\ref{h:form}, $\pi$ is well defined for $p$, $p'$, and $q$. Clearly there is no $p' \ne p$ such that $\pi(p') = \pi(p)$ (holds as long as $t_i \ne t_j$ for $i \ne j$). However, if $p' \underset{\mathcal{R}_T}{\overset{\ast}{\longleftrightarrow}} q$ then $\pi(p') = \pi(q) = \pi(p)$ since the value is conserved under $\mathcal{R}_T$. A similar argument applies to $q'$.
\end{proof}

The informal argument used in part one unfortunately does not
apply. Let $t_1 = \langle abb, ba \rangle$ and $t_2 = \langle bb, ba
\rangle$. We have the following equivalence with our rules:
$h(t_1,\emptyset,\emptyset)
\underset{\mathcal{R}_T}{\overset{\ast}{\longleftrightarrow}}
h(\emptyset, bba, ab)
\underset{\mathcal{R}_T}{\overset{\ast}{\longleftrightarrow}} h(t_2,
a, \emptyset)$. Although it may seem like an error, the values of all
three terms under $\pi$ are indeed the same.

\begin{lemma}
$1 \overset{\ast}{\longleftrightarrow} g(y,z)$ iff $(y,z) = (s_a^R, s_b^R)$ where $s_a = u_{i_1} \cdots u_{i_n}$ and $s_b = v_{i_1} \cdots v_{i_n}$ for some nonempty tile sequence $i_{1} \cdots i_{n}$. \label{h:g}
\end{lemma}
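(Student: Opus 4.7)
The plan is to mirror the proof of Lemma~\ref{f:g}, adapted to the second half of the construction. First I observe that among the rules in $\mathcal{R}_1 \cup \mathcal{R}_T$, the constant $1$ appears only on the right-hand side of $h(t_i(x),\emptyset,\emptyset) \rightarrow 1$, and the function symbol $g$ appears only on the right-hand side of $h(\emptyset,x,y) \rightarrow g(x,y)$. Neither $1$ nor any $g$-rooted term occurs on the left-hand side of any rule, so any equivalence proof $1 \overset{\ast}{\longleftrightarrow} g(y,z)$ must factor as
\[ 1 \leftarrow h(t_{i_1}(x'),\emptyset,\emptyset) \underset{\mathcal{R}_T}{\overset{\ast}{\longleftrightarrow}} h(\emptyset,y,z) \rightarrow g(y,z), \]
where the first step inverts the only rule producing $1$ and the last step is the only rule producing a $g$-rooted term.

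For the forward direction, I apply Lemma~\ref{h:form} to the middle derivation to conclude that $t_{i_1}(x')$ has the form $t_{i_1} t_{i_2} \cdots t_{i_n}$, that is, a nonempty tile sequence, and $y, z \in \Gamma^{\ast}$. Lemma~\ref{h:reach} then guarantees the existence of a derivation from $h(t_{i_1}\cdots t_{i_n}, \emptyset, \emptyset)$ to $h(\emptyset, s_a^R, s_b^R)$, where $s_a = u_{i_1} \cdots u_{i_n}$ and $s_b = v_{i_1} \cdots v_{i_n}$. Lemma~\ref{h:inject} tells us that the right endpoint of any such derivation is the unique term of the form $h(\emptyset, \cdot, \cdot)$ compatible with the given left endpoint, and hence $(y,z) = (s_a^R, s_b^R)$.

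For the converse direction, given a nonempty tile sequence $i_1 \cdots i_n$ with $s_a = u_{i_1} \cdots u_{i_n}$ and $s_b = v_{i_1} \cdots v_{i_n}$, I use Lemma~\ref{h:reach} to obtain $h(t_{i_1}\cdots t_{i_n}, \emptyset, \emptyset) \overset{\ast}{\longleftrightarrow} h(\emptyset, s_a^R, s_b^R)$, then prepend the rule $h(t_{i_1}(\cdots), \emptyset, \emptyset) \rightarrow 1$ and append the rule $h(\emptyset, s_a^R, s_b^R) \rightarrow g(s_a^R, s_b^R)$ to produce the chain $1 \overset{\ast}{\longleftrightarrow} g(s_a^R, s_b^R)$.

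The main obstacle is justifying the rigid factorization of the proof through terms of the form $h(\text{tile sequence}, \emptyset, \emptyset)$ and $h(\emptyset, \cdot, \cdot)$. This requires a careful inspection of every rule in $\mathcal{R}_1 \cup \mathcal{R}_T$ to confirm that $1$ and $g$-rooted terms are genuine dead ends reachable only via the two indicated rules; once this is established the rest reduces to applying the prior lemmas. The subtlety noted in the example after Lemma~\ref{h:inject} does not interfere here, because we only require uniqueness of the $h(\emptyset,\cdot,\cdot)$ endpoint for a given left endpoint, not uniqueness of the tile sequence itself.
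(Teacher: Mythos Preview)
Your proof is correct and follows essentially the same approach as the paper's own proof: factor the equivalence through $h(t_{i_1}(\cdots),\emptyset,\emptyset)$ and $h(\emptyset,y,z)$, then invoke Lemmas~\ref{h:form}, \ref{h:reach}, and \ref{h:inject}. If anything, you are more explicit than the paper about which lemma serves which direction (using \ref{h:reach} together with \ref{h:inject} for the forward implication, and \ref{h:reach} alone for the converse), whereas the paper's one-line attribution of $\impliedby$ to Lemma~\ref{h:inject} is somewhat terse.
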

\begin{proof}
There is only one rewrite applicable at each end term: $h(x,\emptyset,\emptyset) \rightarrow 1$ and $h(\emptyset,y,z) \rightarrow g(y,z)$. Thus, our proof will have the form $1 \leftarrow h(x,\emptyset,\emptyset) \overset{\ast}{\longleftrightarrow} h(\emptyset,y,z) \rightarrow g(y,z)$. We know by Lemma~\ref{h:form} that $x = t$ for some $t = t_{i_1} \cdots t_{i_n}$. Furthermore, $t$ cannot be an empty sequence due to how we constructed the rules in $\mathcal{R}_1$. By Lemma~\ref{h:reach} we know $(y,z) = (s_a^R, s_b^R)$. By Lemma~\ref{h:inject} we know $\impliedby$.
\end{proof}

\begin{lemma}
$0 \overset{\ast}{\longleftrightarrow} 1$ iff $P$ has a solution. \label{soln}
\end{lemma}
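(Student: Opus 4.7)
The forward direction $(\Longleftarrow)$ is a direct corollary of the two preceding key lemmas. Given a PCP solution $i_1\cdots i_n$ with $s = u_{i_1}\cdots u_{i_n} = v_{i_1}\cdots v_{i_n}$ (nonempty), Lemma~\ref{f:g} gives $0 \overset{\ast}{\longleftrightarrow} g(s^R, s^R)$ and Lemma~\ref{h:g} gives $g(s^R, s^R) \overset{\ast}{\longleftrightarrow} 1$, both in $\mathcal{R}$. Transitivity then yields $0 \overset{\ast}{\longleftrightarrow} 1$.

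For the converse $(\Longrightarrow)$, my plan is to extract a common $g$-rooted intermediate term from any proof $0 \overset{\ast}{\longleftrightarrow} 1$ and then apply Lemmas~\ref{f:g} and~\ref{h:g} to that intermediate. Along a proof sequence $0 = T_0 \leftrightarrow \cdots \leftrightarrow T_k = 1$, the root symbol changes only on root-level rewrites. Inspecting the rules, root-level rewrites in $\mathcal{R}_0 \cup \mathcal{R}_S$ can move the root only within $\{0, f, f^{(\gamma)}, g\}$, while root-level rewrites in $\mathcal{R}_1 \cup \mathcal{R}_T$ can move it only within $\{1, h, h^{(i,k)}, g\}$. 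The only root symbol common to both regions is $g$, so to get from root $0$ to root $1$ the root must, at some index $j$, be $g$; fix such a $j$ and write $T_j = g(y,z)$.

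The main obstacle is that Lemmas~\ref{f:g} and~\ref{h:g} are stated over the restricted rule sets $\mathcal{R}_0 \cup \mathcal{R}_S$ and $\mathcal{R}_1 \cup \mathcal{R}_T$, whereas our proof lives over the full $\mathcal{R}$. I plan to bridge this via two signature projections. Let $\pi$ replace every maximal subterm rooted by a part-two-only symbol in $\{1, h, h^{(i,k)}, t_i\}$ by a fresh constant $\ast$; symmetrically let $\pi'$ collapse subterms rooted in the part-one-only symbols $\{0, f, f^{(\gamma)}\}$ to $\ast$. A routine case analysis on each rewrite step shows that $\pi$ maps proofs over $\mathcal{R}$ to proofs over $\mathcal{R}_0 \cup \mathcal{R}_S$ (rewrites occurring inside a collapsed subterm become trivial, while rewrites outside any collapsed subterm survive unchanged), and symmetrically for $\pi'$. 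Applying $\pi$ to $0 \overset{\ast}{\longleftrightarrow} g(y,z)$ and then Lemma~\ref{f:g} forces $(\pi(y),\pi(z)) = (s^R, s^R)$ for some nonempty $s \in \Gamma^{\ast}$; since $s^R$ contains no $\ast$, we in fact have $y = z = s^R$. Applying $\pi'$ symmetrically to $g(y,z) \overset{\ast}{\longleftrightarrow} 1$ and then Lemma~\ref{h:g} forces $y = s_a^R$ and $z = s_b^R$, where $s_a = u_{i_1}\cdots u_{i_n}$ and $s_b = v_{i_1}\cdots v_{i_n}$ for some nonempty tile sequence. Equating the two descriptions of $(y,z)$ gives $s = s_a = s_b$, i.e., a PCP solution. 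The most delicate verification, and the one I would check explicitly, is that treating $\ast$ as an inert constant appearing in no rule allows Lemmas~\ref{f:g} and~\ref{h:g} to apply verbatim to the projected proofs, in particular ruling out any occurrence of $\ast$ inside the projected strings.
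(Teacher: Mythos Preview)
Your proof is correct and follows the same overall strategy as the paper: locate a $g$-rooted intermediate term along any proof $0 \overset{\ast}{\longleftrightarrow} 1$, then invoke Lemma~\ref{f:g} on the left half and Lemma~\ref{h:g} on the right half to force the two descriptions of that intermediate to coincide, yielding a PCP solution. The paper's own proof is a terse three-line version of exactly this argument and does not explicitly confront the point you raise---that Lemmas~\ref{f:g} and~\ref{h:g} are stated only over the restricted rule sets $\mathcal{R}_0\cup\mathcal{R}_S$ and $\mathcal{R}_1\cup\mathcal{R}_T$, whereas the proof of $0 \overset{\ast}{\longleftrightarrow} 1$ lives in the full system; your signature-projection device is a clean and legitimate way to make rigorous what the paper leaves implicit, namely that the two halves cannot interfere because their non-shared function symbols are disjoint.
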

\begin{proof}
Any proof of $0 \overset{\ast}{\longleftrightarrow} 1$ must go through some term $g(x,y)$. Due to Lemma~\ref{f:g} and Lemma~\ref{h:g} the only term $g(x,y)$ that both $0$ and $1$ can reach must have $x$ and $y$ as a pair of identical strings generated by the tiles in $P$. Thus, $P$ must have a solution.
\end{proof}

Finally, we add the set of rules that guarantee $0$ and $1$ are the only normal forms. These rules do not disturb any of the results above.
\begin{equation*}
\mathcal{R}_{nf} \enskip := \enskip
\left \{
\begin{aligned}
	f(x,y,z) &\rightarrow f(x,y,z) & h(x,y,z) &\rightarrow h(x,y,z) \\
	f^{(\gamma)}(x,y,z) &\rightarrow f^{(\gamma)}(x,y,z) & h^{(i,j)}(x,y,z) &\rightarrow h^{(i,j)}(x,y,z) \\
	g(x,y) &\rightarrow g(x,y) & \gamma(x) &\rightarrow \gamma(x) \\
	\emptyset &\rightarrow \emptyset & t_i(x) &\rightarrow t_i(x)
\end{aligned}
\right \}
\end{equation*}
Now that $0$ and $1$ are the only normal forms, their equivalence implies a violation of $UN^=$. Thus, our complete set of rules is: $\mathcal{R} := \mathcal{R}_0 \cup \mathcal{R}_1 \cup \mathcal{R}_S \cup \mathcal{R}_T \cup \mathcal{R}_{nf}$.

\begin{theorem}
$UN^=$ is undecidable for linear TRS that are right-flat and have left-hand sides of depth two. 
\end{theorem}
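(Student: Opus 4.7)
The plan is to show that the construction $\mathcal{R} = \mathcal{R}_0 \cup \mathcal{R}_1 \cup \mathcal{R}_S \cup \mathcal{R}_T \cup \mathcal{R}_{nf}$ built in Section~\ref{lrfc} is a many-one reduction from PCP to the complement of $UN^=$ restricted to the stated class. The structural side of this proof (which is where I would begin) amounts to a rule-by-rule inspection of $\mathcal{R}$: (i) every variable in $\mathcal{F}$ appears at most once on each side of every rule, so $\mathcal{R}$ is linear; (ii) every right-hand side is either a constant ($0$ or $1$), a variable, or of the form $g(x,y)$, $f^{(\gamma)}(x,y,z)$, $h^{(i,j)}(x,y,z)$, $f(x,y,z)$, $h(x,y,z)$, $\gamma(x)$, $t_i(x)$, or $\emptyset$, all of height at most one, so $\mathcal{R}$ is right-flat; and (iii) left-hand sides have height at most two, since the ``deepest'' patterns are those like $f(\gamma(x),y,z)$, $h(t_i(x),y,z)$, and $h^{(i,k)}(x,\gamma^k_{u_i}(y),\gamma^k_{v_i}(z))$. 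This settles membership in the stated class.

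Next I would verify the normal-form claim: the rules in $\mathcal{R}_{nf}$ consist of self-loops $s \to s$ for every function symbol and constant other than $0$ and $1$, so every term whose root is not $0$ or $1$ is reducible, while $0$ and $1$ appear on no left-hand side of any rule in $\mathcal{R}$. Hence $NF_{\mathcal{R}} = \{0, 1\}$. I would also observe that the self-loops in $\mathcal{R}_{nf}$ add no new equivalences to $\overset{\ast}{\longleftrightarrow}$, so Lemma~\ref{soln}, which was proved for $\mathcal{R}_0 \cup \mathcal{R}_1 \cup \mathcal{R}_S \cup \mathcal{R}_T$, transfers verbatim to $\mathcal{R}$.

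Combining these facts finishes the reduction: $\mathcal{R}$ fails $UN^=$ iff there exist two distinct normal forms equivalent under $\overset{\ast}{\longleftrightarrow}_{\mathcal{R}}$, iff $0 \overset{\ast}{\longleftrightarrow}_{\mathcal{R}} 1$ (since $\{0,1\}$ are the only normal forms), iff $P$ has a solution, by Lemma~\ref{soln}. The map $P \mapsto \mathcal{R}$ is clearly computable, so a decision procedure for $UN^=$ on linear, right-flat TRSs with depth-two left-hand sides would decide PCP. Since PCP is undecidable, so is $UN^=$ on this class.

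The main obstacle in this proof is not really hidden here; the substantive work already sits in Lemmas~\ref{f:g}, \ref{h:g}, and \ref{soln}, together with the injectivity arguments of Lemmas~\ref{f:inject} and \ref{h:inject}. The one subtlety worth flagging explicitly is to check that the auxiliary function symbols $f^{(\gamma)}$, $h^{(i,k)}$ in $\mathcal{R}_S$ and $\mathcal{R}_T$ cannot introduce shortcuts between $0$ and $1$ that bypass the ``bridge'' $g(x,y)$: because these superscripted symbols appear only in rules of $\mathcal{R}_S \cup \mathcal{R}_T$, every $\overset{\ast}{\longleftrightarrow}$ step through a superscripted term must be matched by the paired rewrite, exactly as the simulation diagrams in Section~\ref{lrfc} indicate. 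With that observation in hand, the theorem follows.
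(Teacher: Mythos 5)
Your proposal is correct and follows the same route as the paper, which simply invokes Lemma~\ref{soln} as establishing the reduction from PCP; you have merely spelled out the routine structural checks (linearity, right-flatness, depth-two left-hand sides, and that $\mathcal{R}_{nf}$ forces $0$ and $1$ to be the only ground normal forms without adding equivalences) that the paper leaves implicit.
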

\begin{proof}
Direct consequence of Lemma~\ref{soln}, which proves our construction reduces $UN^=$ to solving PCP.
\end{proof}

\begin{figure}
\begin{align*}
0 &\lra f(bbaabbbaa,\emptyset,\emptyset) &&\lra h(\emptyset,aabbbaabb,aabbbaabb) \\
 &\lra f^{(b)}(baabbbaa,\emptyset,\emptyset) &&\lra h^{(1,2)}(\emptyset,aabbbaabb,abbbaabb) \\
 &\lra f(baabbbaa,b,b) &&\lra h^{(1,1)}(\emptyset,aabbbaabb,bbbaabb) \\
 &\lra f^{(b)}(aabbbaa,b,b) &&\lra h^{(1,0)}(\emptyset,abbbaabb,bbaabb) \\
 &\lra f(aabbbaa,bb,bb) &&\lra h(t_1,abbbaabb,bbaabb) \\
 &\lra f^{(a)}(abbbaa,bb,bb) &&\lra h^{(3,2)}(t_1,bbbaabb,bbaabb) \\
 &\lra f(abbbaa,abb,abb) &&\lra h^{(3,1)}(t_1,bbaabb,baabb) \\
 &\lra f^{(a)}(bbbaa,abb,abb) &&\lra h^{(3,0)}(t_1,baabb,aabb) \\
 &\lra f(bbbaa,aabb,aabb) &&\lra h(t_3t_1,baabb,aabb) \\
 &\lra f^{(b)}(bbaa,aabb,aabb) &&\lra h^{(2,1)}(t_3t_1,aabb,abb) \\
 &\lra f(bbaa,baabb,baabb) &&\lra h^{(2,0)}(t_3t_1,abb,bb) \\
 &\lra f^{(b)}(baa,baabb,baabb) &&\lra h(t_2t_3t_1,abb,bb) \\
 &\lra f(baa,bbaabb,bbaabb) &&\lra h^{(3,2)}(t_2t_3t_1,bb,bb) \\
 &\lra f^{(b)}(aa,bbaabb,bbaabb) &&\lra h^{(3,1)}(t_2t_3t_1,b,b) \\
 &\lra f(aa,bbbaabb,bbbaabb) &&\lra h^{(3,0)}(t_2t_3t_1,\emptyset,\emptyset) \\
 &\lra f^{(a)}(a,bbbaabb,bbbaabb) &&\lra h(t_3t_2t_3t_1,\emptyset,\emptyset) \lra 1 \\
 &\lra f(a,abbbaabb,abbbaabb) \\
 &\lra f^{(a)}(\emptyset,abbbaabb,abbbaabb) \\
 &\lra f(\emptyset,aabbbaabb,aabbbaabb) \\
 &\lra g(aabbbaabb,aabbbaabb)
\end{align*}
\caption{An example for $P = \{ \langle a, baa \rangle, \langle ab,
aa \rangle, \langle bba, bb \rangle \}.$}
\end{figure}

\subsection{Linear and Left-Flat Construction}

If we reverse the orientation of all rules in $\mathcal{R}$ we run into a small problem: the $\mathcal{R}$ normal forms $0$ and $1$ are no longer normal forms after reorientation. To remedy this, we replace the rules $\{ f(\gamma(x),\emptyset,\emptyset) \rightarrow 0, \enskip h(T_i(x),\emptyset,\emptyset) \rightarrow 1 \}$ with the following modifications:
\begin{equation*}
\mathcal{R}_{j} := \left \{
\begin{aligned}
	j_0(x) &\rightarrow 0 \\
	j_0(x) &\rightarrow f(x,\emptyset,\emptyset) \\
	j_1(x) &\rightarrow h(\emptyset,\emptyset,x) \\
	j_1(x) &\rightarrow 1
\end{aligned}
\right \}
\end{equation*}
Thus, 0 and 1 remain normal forms after reorientation and $Var(r) \subset Var(l)$. However, we must now disallow the empty string as a solution somewhere else in the construction. To that end, we replace $\{ f(\emptyset,x,y) \rightarrow g(x,y), \enskip h(\emptyset,x,y) \rightarrow g(x,y) \}$ with the following:
\begin{equation*}
\mathcal{R}_{g} := \left \{
\begin{aligned}
	g^{(\gamma,\gamma)}(x,y) &\rightarrow f(\emptyset, \gamma(x), \gamma(y)) \\
	g^{(\gamma_i,\gamma_j)}(x,y) &\rightarrow h(\emptyset, \gamma_i(x), \gamma_j(y))
\end{aligned}
\right \}
\end{equation*}
Any rules that have not been replaced are simply reoriented. Thus, our final rule set is:
\begin{align*}
\mathcal{R} \enskip := \enskip \mathcal{R}_{j} &\cup \mathcal{R}_{g} \cup \mathcal{R}_{S}^{-1} \cup \mathcal{R}_{T}^{-1} \\
	&\cup \mathcal{R}_{nf}\backslash \{ g(x,y) \rightarrow g(x,y) \} \\
	&\cup \{ g^{(\gamma_i,\gamma_j)}(x,y) \rightarrow g^{(\gamma_i,\gamma_j)}(x,y) \}
\end{align*}

\begin{theorem}
$UN^=$ is undecidable for linear TRS that are left-flat and have right-hand sides of depth two.
\end{theorem}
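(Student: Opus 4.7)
The plan is to obtain a linear, left-flat TRS with right-hand sides of depth at most two by reversing all rule orientations of the right-flat construction $\mathcal{R}$ from Section~\ref{lrfc}, with two surgical modifications. Naive reversal fails because (i) rules like $f(\gamma(x),\emptyset,\emptyset) \to 0$, once flipped, would make $0$ reducible and would introduce right-hand side variables absent from the left, and (ii) the flipped $g(x,y) \to f(\emptyset,x,y)$ permits the empty string as a PCP solution. The $\mathcal{R}_j$ gadget (with $j_0, j_1$ each providing two short rules) simulates the bidirectional equivalences $0 \lra f(x,\emptyset,\emptyset)$ and $1 \lra h(\emptyset,\emptyset,x)$ without violating either property, while keeping $0$ and $1$ as normal forms. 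The $\mathcal{R}_g$ gadget restores non-emptiness by pushing a leading symbol onto both string arguments via $g^{(\gamma,\gamma)}(x,y) \to f(\emptyset, \gamma(x), \gamma(y))$ on the $f$-side and $g^{(\gamma_i,\gamma_j)}(x,y) \to h(\emptyset, \gamma_i(x), \gamma_j(y))$ on the $h$-side.

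Next I would re-establish the analogues of Lemmas~\ref{f:form}--\ref{soln} for the new system. Because reversing rule orientations does not alter symmetric-transitive closure, the analogues of Lemmas~\ref{f:form}, \ref{f:reach}, \ref{f:inject}, and their $h$-counterparts transfer verbatim to $\mathcal{R}_S^{-1}$ and $\mathcal{R}_T^{-1}$. The analogues of Lemmas~\ref{f:g} and \ref{h:g} need re-proving to accommodate the new meeting points: any proof $0 \lra g^{(\gamma,\gamma)}(y,z)$ must factor as $0 \leftrightarrow j_0(x) \leftrightarrow f(x,\emptyset,\emptyset) \lra f(\emptyset, \gamma(y), \gamma(z)) \leftrightarrow g^{(\gamma,\gamma)}(y,z)$, which forces $x$ to be a non-empty $\Gamma^{\ast}$-string with $x^R = \gamma(y) = \gamma(z)$, so $y = z$. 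The analogous $h$-side statement produces, for $1 \lra g^{(\gamma_i,\gamma_j)}(y,z)$, a non-empty tile sequence whose $u$- and $v$-images are $\gamma_i(y)^R$ and $\gamma_j(z)^R$ respectively. The analogue of Lemma~\ref{soln} then gives $0 \lra 1$ iff $P$ has a solution.

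The main obstacle will be controlling the looseness of the $j$-gadget: because $j_0(x) \to f(x,\emptyset,\emptyset)$ admits arbitrary $x$, one must rule out spurious proofs in which $x$ is not a $\Gamma^{\ast}$-string. This is settled by the transferred Lemma~\ref{f:form}: the rules of $\mathcal{R}_S^{-1}$ only ever change a $\Gamma$-symbol at position $1$ of $f$, so any term $f(x,\emptyset,\emptyset)$ equivalent to some $f(\emptyset,y,z)$ already requires $x \in \Gamma^{\ast}$; the $h$-side uses the corresponding fact for $\mathcal{R}_T^{-1}$. The remaining work is a per-rule check that $\mathcal{R}_j \cup \mathcal{R}_g \cup \mathcal{R}_S^{-1} \cup \mathcal{R}_T^{-1}$ together with the modified $\mathcal{R}_{nf}$ is linear, has flat left-hand sides, and has right-hand sides of depth at most two, and that $0$ and $1$ remain the sole normal forms because every other function symbol has a self-looping rule (with $g^{(\gamma_i,\gamma_j)}(x,y) \to g^{(\gamma_i,\gamma_j)}(x,y)$ replacing the obsolete $g$-loop). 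The theorem then follows as a direct consequence of the reduction, mirroring the right-flat case.
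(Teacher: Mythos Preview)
Your proposal is correct and follows essentially the same approach as the paper: the construction you describe (the $\mathcal{R}_j$ and $\mathcal{R}_g$ gadgets, reversal of $\mathcal{R}_S$ and $\mathcal{R}_T$, and the adjusted $\mathcal{R}_{nf}$) is exactly the paper's modified system, and your plan to re-establish Lemmas~\ref{f:form}--\ref{soln} matches the paper's one-line proof that ``all proofs in Section~\ref{lrfc} can be easily adapted.'' Your treatment is in fact more detailed than the paper's, correctly isolating which lemmas transfer verbatim under reversal and which (the $g$-reachability lemmas) require the new factorization through $j_0,j_1$ and $g^{(\gamma_i,\gamma_j)}$.
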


\begin{proof}
All proofs in Section~\ref{lrfc} can be easily adapted for this modified TRS.
\end{proof}

\section{Conclusion}

The \UNeq{} property of TRSs is shown to be decidable for the
shallow class and undecidable for the class of linear TRS in which
one side of the rule is allowed to be at most depth-two and the other side is
flat. Among the fundamental properties of TRSs only the word problem and
the \UNeq{} property are now known to be decidable for the
shallow class. An important direction for future research is to give a
complete classification of the basic properties for all subclasses of
linear, depth-two TRSs (see also \cite{rakrule08} in this regard).

\section*{Acknowledgments.} We thank Ross Greenwood and the reviewers
of FSTTCS 2010 for careful readings and constructive comments. 
%
% bibliography
%

\bibliographystyle{ACM-Reference-Format-Journals}
%\subparagraph*{Acknowledgments} We thank Ross Greenwood for a careful
%reading and for his comments/questions.   

%\bibliographystyle{abbrv}
{\small \bibliography{/auto/disk01/faculty/cosc/rmverma/latex/space}}
%\appendix
%\section{Relationship of $UN^\rightarrow$ and $UN^=$}

\end{document}